\documentclass[10pt,conference]{IEEEtran}
\usepackage{amsmath,amsthm,cite}
\usepackage{amssymb}
\usepackage{fancyvrb}
\usepackage{graphicx}
\usepackage{algorithm}
\usepackage{algorithmic}
\usepackage{url}
\usepackage{balance}
\usepackage{longtable}
\usepackage{tabularx}
\usepackage[english]{babel}
\usepackage{blindtext}
\usepackage{balance}
\usepackage{etoolbox}
\newtheorem{thm}{Theorem}
\newtheorem{lemma}[thm]{Lemma}

\begin{document}
\title{Information Source Detection in the SIR Model: A Sample Path Based Approach}

\author{\IEEEauthorblockN{Kai Zhu and Lei Ying}
\IEEEauthorblockA{School of Electrical, Computer and Energy Engineering\\ Arizona State University\\
Tempe, AZ, United States, 85287\\
Email: kzhu17@asu.edu, lei.ying.2@asu.edu}}

\maketitle
\begin{abstract}
This paper studies the problem of detecting the information source
in a network in which the spread of information follows the popular
Susceptible-Infected-Recovered (SIR) model. We assume all nodes in
the network are in the susceptible state initially except the
information source which is in the infected state. Susceptible nodes
may then be infected by infected nodes, and infected nodes may
recover and will not be infected again after recovery. Given a
snapshot of the network, from which we know all infected nodes but
cannot distinguish susceptible nodes and recovered nodes, the
problem is to find the information source based on the snapshot and
the network topology. We develop a sample path based approach where the estimator of the information
source is chosen to be the root node associated with the sample path
that most likely leads to the observed snapshot. We prove for
infinite-trees, the estimator is a node that minimizes the
maximum distance to the infected nodes. A reverse-infection algorithm is proposed to find such an estimator in general
graphs. We prove that for $g$-regular trees such that $gq>1,$ where
$g$ is the node degree and $q$ is the infection probability, the
estimator is within a constant distance from the actual source with a high probability, independent of the
number of infected nodes and the time the snapshot is taken. Our simulation results show that for tree networks, the estimator produced by the
reverse-infection algorithm is closer to the actual source than the
one identified by the closeness centrality heuristic. We then further evaluate the performance of the reverse infection algorithm on several real world networks.
\end{abstract}

\section{Introduction}
Diffusion processes in networks refer to the spread of information
throughout the networks, and have been widely used to model many
real-world phenomena such as the outbreak of epidemics, the
spreading of gossips over online social networks, the spreading of
computer virus over the Internet, and the adoption of innovations.
Important properties of diffusion processes such as the outbreak
thresholds \cite{MooNew_00} and the impact of network topologies
\cite{GanMasTow_05} have been intensively studied.

In this paper, we are interested in the reverse of the diffusion
problem: given a snapshot of the diffusion process at time $t,$ can
we tell which node is the source of the diffusion? The answer to
this problem has many important applications, and can help us answer
the following questions: who is the rumor source in online social
networks? which computer is the first one infected by a computer
virus? who is the one who uploaded contraband materials to the
Internet? and where is the source of an epidemic?

We call this problem information source detection problem. This
information source detection problem has been studied in
\cite{ShaZam_10,ShaZam_11,ShaZam_12} under the
Susceptible-Infected (SI) model, in which susceptible nodes may be
infected but infected nodes cannot recover. The authors formulated
the problem as a maximum likelihood estimation (MLE) problem, and
developed novel algorithms to detect the source.

In this paper, we adopt the Susceptible-Infected-Recovered (SIR)
model, a standard model of epidemics \cite{Bai_75,EasKle_10}.
The network is assumed to be an undirected graph and each node in
the network has three possible states: susceptible ($S$), infected
($I$), and recovered ($R$). Nodes in state $S$ can be infected and
change to state $I$, and nodes in state $I$ can recover and change
to state $R$. Recovered nodes cannot be infected again. We assume
that initially all nodes are in the susceptible state except one
infected node (called the information source). The information
source then infects its neighbors, and the information starts to
spread in the network.  Now given a snapshot of the network, in
which we can identify infected nodes and healthy (susceptible and
recovered) nodes (we assume susceptible nodes and recovered nodes
are indistinguishable), the question is which node is the
information source.

We remark that it is very important to take recovery into consideration since recovery can happen due to various
reasons in practice. For example, a contraband material uploader may delete the file, a
computer may recover from a virus attack after anti-virus software
removes the virus, and a user may delete the rumor from her/his
blog. In order to solve the information source detection problem in
these scenarios, we study the SIR model in this paper, which makes
the problem significantly more challenging than that in the SI model
as we will explain in the related work section.

\subsection{Main Results}
The main results of this paper are summarized below.

\begin{itemize}
\item Similar to the SI model, the information source detection problem can be
  formalized as an MLE problem. Unfortunately, to solve the MLE
 problem, we need to consider all possible infection sample paths, and for each sample path,
  we need to specify the infection time and recovery time for each healthy node
  and the infection time
for each infected node, so the number of possible sample paths is at
the order of $\Omega(t^N),$ where $N$ is the network size and $t$ is
the time the snapshot is obtained. Therefore, the MLE problem is
difficult to solve even when $t$ is known. The problem becomes
much harder when $t$ is unknown, which is the assumption of this
paper. To overcome this difficulty,
  we propose a sample path based approach. We propose to find the sample
  path which most likely leads to the observed snapshot and view the source associated with
 that sample path as the information source.
 We call this problem optimal sample path detection problem.
We investigate the structure properties of the optimal sample path
in trees. Defining the infection eccentricity of a node to be the
maximum distance from the node to infected nodes, we prove that the
source node of the optimal sample path is the node with the minimum
infection eccentricity. Since a node with the minimum eccentricity
in a graph is called the Jordan center, we call the nodes with the
minimum infection eccentricity the Jordan infection centers. Therefore, the sample path based estimator is one of the Jordan infection centers.

\item We propose a low complexity
  algorithm, called reverse infection algorithm, to find the sample path based estimator in general graphs.
 In the algorithm, each infected node broadcasts its identity in the network, the node
who first collect all identities of infected nodes declares itself
as the information source, breaking ties based on the sum of distances to infected nodes.  The running time
of this algorithm is equal to the minimum infection eccentricity,
and the number of messages each node receives/sends at each iteration is bounded by
 the degree of the node.

\item We analyze the performance of the reverse infection algorithm on $g$-regular trees,
and show that the algorithm can output
  a node within a constant distance from the actual source with a high probability, independent of the
  number of infected nodes and the time the snapshot is taken.

\item We conduct extensive simulations over various
  networks to verify the performance of the reverse infection algorithm. The
  detection rate over regular trees is found to be around $60\%$,
  and is higher than that of the infection closeness centrality (or called distance centrality) heuristic. The infection closeness of a node is defined to be the
inverse of the sum of distances to infected nodes and the infection closeness centrality
heuristic is to claim the node with the maximum infection closeness
as the source. Note that in \cite{ShaZam_10,ShaZam_11,ShaZam_12}, the
authors proved the node with the maximum infection closeness is the MLE on regular trees. For real world
networks, our experiments also show that the reverse infection
algorithm outperforms random guesses significantly. We then further evaluate the performance of the reverse infection algorithm on several real world networks.
\end{itemize}

\subsection{Related Work}
There have been extensive studies on the spread
of epidemics in networks based on the SIR model (see
\cite{New_02,PasVes_01,MooNew_00,GanMasTow_05}
and references within). The work most related to this paper is
\cite{ShaZam_10,ShaZam_11,ShaZam_12}, in which the information source
detection problem was studied under the SI model. \cite{LuoTay_12,LuoTayLen_12} considers the problem of detecting multiple information sources under the SI model. This paper considers the SIR model, where
infection nodes may recover, which can occur in many practical
scenarios as we have explained. Because of node recovery, the information source
detection problem under the SIR model differs significantly from that under the SI
model. The differences are summarized below.
\begin{itemize}
\item The set of possible sources in the SI model \cite{ShaZam_10,ShaZam_11,ShaZam_12}
 is restricted to the set of infected nodes. In the SIR model, all
nodes are possible information sources because we assume susceptible
nodes and recovered nodes are indistinguishable and a healthy node
may be a recovered node so can be the information source. Therefore,
the number of candidate sources is much larger in the SIR model than
that in the SI model.

\item A key observation in \cite{ShaZam_10,ShaZam_11,ShaZam_12} is that on regular trees, all permitted permutations of infection sequences (a infection
sequence specifies the order at which nodes are infected) are
equally likely under the SI model. The number of possible permutations from a fixed
root node, therefore, decides the likelihood of the root node being
the source. However, under the SIR model, different infection sequences are associated with different probabilities, so counting the number of permutations are not sufficient.

\item \cite{ShaZam_10,ShaZam_11,ShaZam_12} proved that the node
 with the maximum closeness centrality is the
an MLE on regular-trees. We define the infection closeness
centrality to be the inverse of the sum of distances to infected nodes. Our simulations show
that the sample path based estimator is closer to the actual source
than the nodes with the maximum infection closeness.
\end{itemize}

Other related works include: (1) detecting the first adopter of innovations based on a game theoretical model \cite{SubBer_12} in which the authors derived the MLE but the computational complexity is exponential in the number of nodes, (2) network forensics under the SI model \cite{MilCarSha_12}, where the goal is to distinguish an epidemic infection from a random infection, and (3) geospatial abduction problems (see \cite{ShaSubSap_11,ShaSub_11} and references within).

\section{Problem Formulation}

\subsection{The SIR Model for Information Propagation}
Consider an undirected graph $G=\{\cal{V},\cal{E}\}$, where $\cal V$
is the set of nodes and $\cal E$ is the set of (undirected) edges.
Each node $v\in \cal{V}$ has three possible states: susceptible ($S$), infected ($I$), and recovered ($R$). We assume a time slotted
system. Nodes change their states at the beginning of each time
slot, and the state of node $v$ in time slot $t$ is denoted by
$X_v(t).$ Initially, all nodes are in state $S$ except node $v^*$
which is in state $I$ and is the information source. At the
beginning of each time slot, each infected node infects each of its
susceptible neighbors with probability $q,$ independent of other
nodes, i.e., a susceptible node is infected with probability
$1-(1-q)^{n}$ if it has $n$ infected neighbors. Each infected node
recovers with probability $p$, i.e., its state changes from $I$ to
$R$ with probability $p.$ In addition, we assume a recovered node
cannot be infected again. Since whether a node gets infected only
depends on the states of its neighbors and whether a node becomes a
recovered node only depends on its own state in the previous time
slot, the infection process can be modeled as a discrete time Markov
chain ${\bf X}(t)$ where ${\bf X}(t)=\{X_v(t),v\in{\cal V}\}$ is the states of all the nodes
at time slot $t.$ The initial state of this Markov chain is $X_v(0)=S$
for $v\neq v^*$ and $X_{v^*}(0)=I.$

\subsection{Information Source Detection}
We assume ${\bf X}(t)$ is not fully observable since we cannot
distinguish susceptible nodes and recovered ones. So at time $t$, we
observe ${\bf Y}=\{Y_v, v\in{\cal V}\}$ such that
\[
Y_v=\left\{
               \begin{array}{ll}
                 1, & \hbox{if $v$ is in state $I$;} \\
                 0, & \hbox{if $v$ is in state $S$ or $R$.}
               \end{array}
             \right.
\]
The information source detection problem is to identify $v^*$ given
the graph $G$ and ${\bf Y},$ where $t$ is an
unknown parameter.

\begin{figure}
\begin{centering}
  \includegraphics[width=1\columnwidth]{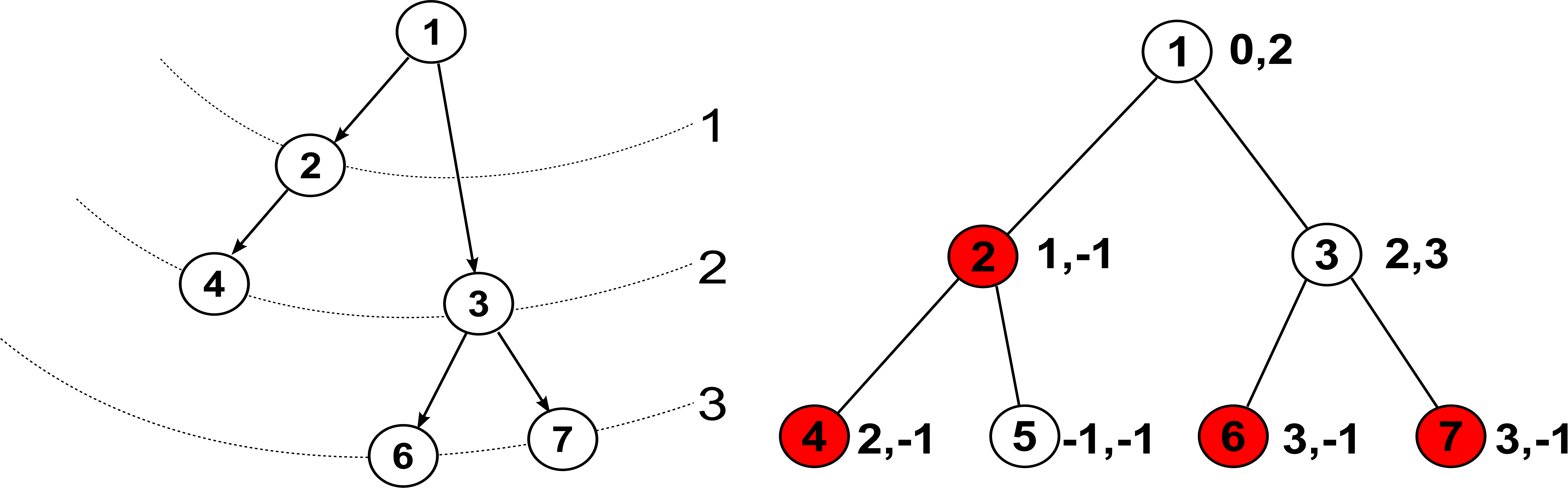}\\
  \caption{An Example of Information Propagation}\label{figure:ip_eg}
  \end{centering}
\end{figure}

Figure \ref{figure:ip_eg} is an example of the infection process.
The left figure shows the information propagation over time. The
nodes on each dotted line are the nodes which are infected at that
time slot, and the arrows indicate where the infection comes from
(e.g., node $4$ is infected by node $2$).

The figure on the right is the network we observe, where the shaded
nodes are infected nodes and others are susceptible or recovered
nodes. The pair of numbers next to each node are the corresponding
infection time and recovery time. For example, node $3$ was infected
at time slot $2$ and recovered at time slot $3$. $-1$ indicates that
the infection or recovery has yet occurred. Note that these two
pieces of information are not available to us, and we include them
in the figure to illustrate the infection and recovery processes. If
we observe the network at the end of time slot $3,$ then the
snapshot of the network is ${\bf Y}=\{0,1,0,1,0,1,1\},$ where the
states are ordered according to the indices of the nodes.

\subsection{Maximum Likelihood Detection}
We define ${\bf X}[0,t]=\{{\bf X}(\tau):0<\tau\leq t\}$ to be a
sample path of the infection process from $0$ to $t.$ In addition,
we define function $F(\cdot)$ such that
\[
F(X_v(t))=\left\{
  \begin{array}{ll}
    1, & \hbox{if $X_v(t)=I$;} \\
    0, & \hbox{otherwise.}
  \end{array}
\right.
\]
We say ${\bf F}(\mathbf{X}[t])={\bf Y}$ if
$F(X_v(t))=Y_v$ for all $v.$ Identifying the information source can
be formulated as a maximum likelihood detection problem as follows:
\[
v^{\dag}\in\hbox{arg} \max_{v\in {\cal V}} \sum_{{\bf X}[0,t]: {\bf
F}({\bf X}(t))={\bf Y}} \Pr({\bf X}[0,t]|v^*=v),
\]
where $\Pr({\bf X}[0,t]|v^*=v)$ is the probability to obtain sample
path ${\bf X}[0,t]$ given the information source is node $v.$

We note the difficulty of solving this maximum likelihood problem is
the curse of dimensionality. For each $v$ such that $Y_v=0,$ we need
to decide its infection time and recovery time (the node is in
susceptible state if the infection time is $>t$), i.e., $O(t^2)$
possible choices; for each $v$ such that $Y_v=1,$ we need to decide
the infection time, i.e., $O(t)$ possible choices. Therefore, even
for a fixed $t,$ the number of possible sample paths is at least at the
order of $t^N,$ where $N$ is the number of nodes in the network. This curse of
dimensionality makes it computationally expensive, if not impossible,
to solve the maximum likelihood problem. To overcome this
difficulty, we propose a sample path based approach which is
discussed below.

\subsection{Sample Path Based Detection}
Instead of using the MLE, we propose to
identify the sample path ${\bf X}^*[0, t^*]$ that most likely leads
to ${\bf Y},$ i.e.,
\begin{align}
{\bf X}^*[0,t^*]=\hbox{arg}\max_{t,{\bf X}[0,t]\in {\cal
X}(t)}\Pr\left({\bf X}[0,t]\right),\label{eqn:optsamplepath}
\end{align}
where ${\cal X}(t)=\{{\bf X}[0,t]|{\bf F}({\bf X}(t))={\bf Y}\}.$
The source node associated with ${\bf X}^*[0,t^*]$ is then viewed as the
information source.

\section{Sample Path Based Detection On Tree Networks}
The optimal sample paths for general graphs are still difficult to
obtain. In this section, we focus on tree networks and derive structure
properties of the optimal sample paths.

First, we introduce the definition of eccentricity in graph theory
\cite{Har_91}. The eccentricity $e(v)$ of a vertex $v$ is
the maximum distance between $v$ and any other vertex in the graph.
The Jordan centers of a graph are the nodes which have the minimum
eccentricity. For example, in Figure
\ref{figure:infectioneccentricity}, the eccentricity of node $v_1$
is $4$ and the Jordan center is $v_2,$ whose eccentricity is $3.$

\begin{figure}
\begin{centering}
  \includegraphics[width=3in]{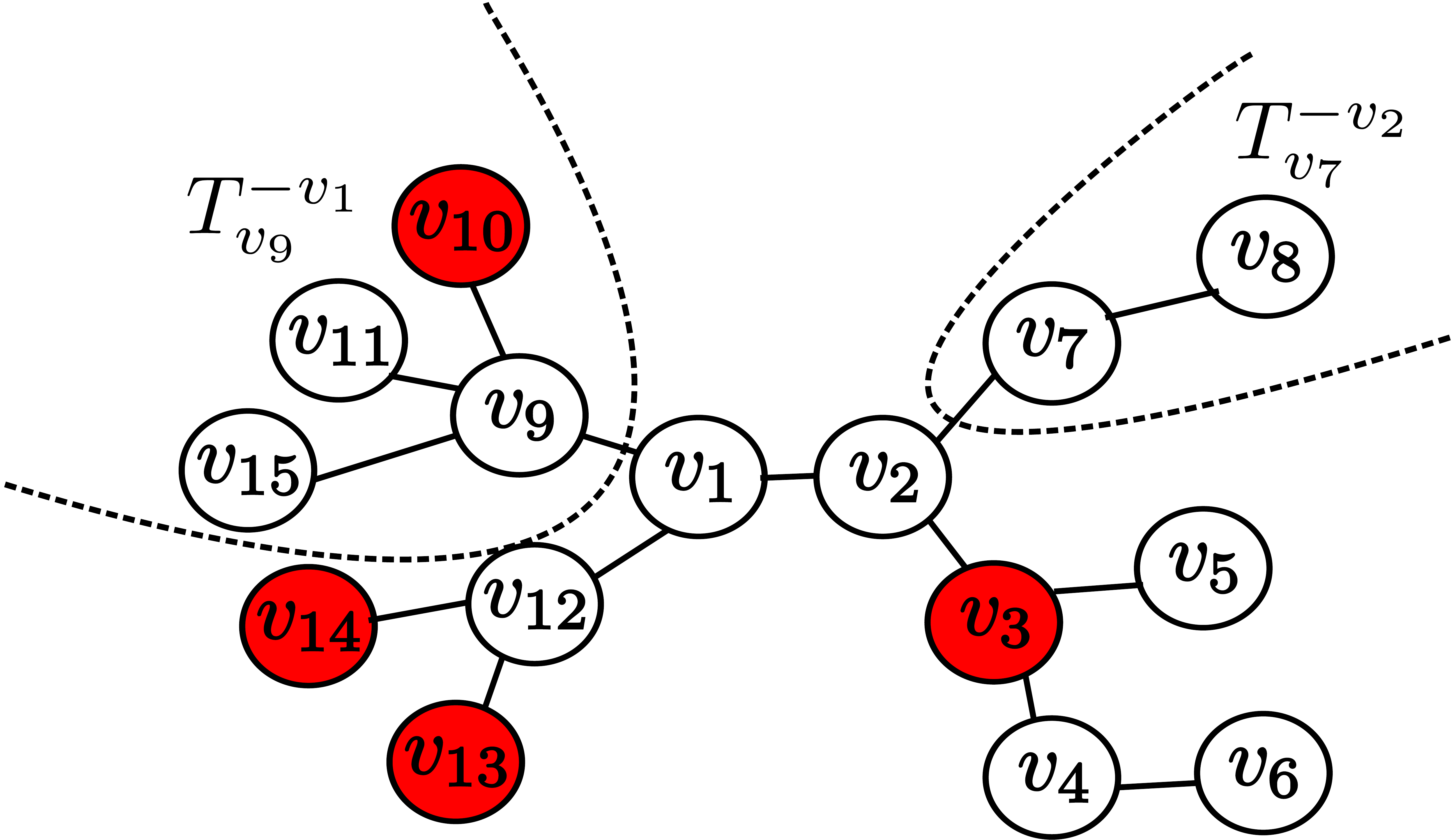}
  \caption{An Example Illustrating the Infection Eccentricity}\label{figure:infectioneccentricity}
  \end{centering}
\end{figure}
Following a similar terminology, we define the infection
eccentricity $\tilde{e}(v)$ given ${\bf Y}$ as the maximum distance
between $v$ and any infected nodes in the graph. Define the Jordan
infection centers of a graph to be the nodes with the minimum
infection eccentricity given ${\bf Y}.$ In Figure
\ref{figure:infectioneccentricity}, nodes $v_3,$ $v_{10},$ $v_{13}$
and $v_{14}$ are observed to be infected. The infection
eccentricities of $v_1,v_2,v_3,v_4$ are $2,3,4,5,$ respectively, and
the Jordan infection center is $v_1.$

We will show that the source associated with the optimal sample path
is a node with the minimum infection eccentricity. We derive this
result using three steps: first, assuming the information source is
$v_r,$ we analyze $t^*_{v_r}$ such that $$t^*_{v_r}=\displaystyle\arg_{t}\max_{t,
{\bf X}[0,t]}\Pr({\bf X}[0,t]|v^*=v_r),$$ i.e., $t^*_{v_r}$ is the
time duration of the optimal sample path in which $v_r$
is the information source. It turns out that $t^*_{v_r}$ equals to
the infection eccentricity of node $v_r.$ Considering Figure \ref{figure:infectioneccentricity} if the source
is $v_1$, then the time duration of the optimal sample path starting from $v_1$ is $2.$

In the second step, we consider two neighboring nodes, say nodes $v_1$ and $v_2.$ We will prove that if $\tilde{e}(v_1)<\tilde{e}(v_2),$ then the optimal sample path rooted at $v_1$ occurs with a higher probability than the optimal sample path rooted at $v_2.$

Finally, at the third step, we will show that given any two nodes $u$
and $v,$ if $v$ has the minimum infection eccentricity and $u$ has a
larger infection eccentricity, then there exists a path from $u$ to
$v$ along which the infection eccentricity monotonically decreases,
which implies that the source of the optimal sample path must be a
Jordan infection center. For example, in Figure
\ref{figure:infectioneccentricity}, node $v_4$ has a larger
infection eccentricity than $v_1$ and $v_4\rightarrow v_3
\rightarrow v_2 \rightarrow v_1$ is the path along which the
infection eccentricity monotonically decreases from $5$ to $2.$

\subsection{The Optimal Time}
\begin{lemma}\label{lem:inequalitytime}
Consider a tree network rooted at $v_r$ and with infinitely many
levels. Assume the information source is the root, and the observed
infection topology is ${\bf Y}$ which contains at least one infected node. If $\tilde{e}(v_r)\leq t_1<t_2,$ then the following inequality
holds
$$\max_{{\bf X}[0,t_1]\in {\cal X}(t_1)} \Pr({\bf X}[0,t_1])> \max_{{\bf X}[0,t_2] \in {\cal X}(t_2)} \Pr({\bf X}[0,t_2]),$$
where ${\cal X}(t)=\{{\bf X}[0,t]|{\bf F}({\bf
X}(t))={\bf Y}\}.$ In addition,
$$t_{v_r}^*=\tilde{e}(v_r)=\max_{u\in{\cal I}}d(v_r,u),$$
where $d(v_r,u)$ is the length of the shortest path between $v_r$
and $u$ and also called the distance between $v_r$ and $u$, and
${\cal I}$ is the set of infected nodes.  \hfill{$\square$}
\end{lemma}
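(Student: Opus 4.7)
\emph{Plan.} I will prove the lemma by establishing three claims: (i) $\mathcal X(t)=\emptyset$ for $t<\tilde e(v_r)$, (ii) $\mathcal X(\tilde e(v_r))\neq\emptyset$, and (iii) the map $t\mapsto\max_{\mathbf X[0,t]\in\mathcal X(t)}\Pr(\mathbf X[0,t])$ is strictly decreasing on $t\ge\tilde e(v_r)$. Claim (i) holds because any sample path must traverse the unique tree path of length $\tilde e(v_r)$ from $v_r$ to the farthest infected node, so at least $\tilde e(v_r)$ time slots are required. Claim (ii) is witnessed by the explicit earliest-infection path that infects each ancestor of every $u\in\mathcal I$ at time exactly $d(v_r,\cdot)$, leaves every other node permanently susceptible, and assigns no recoveries. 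Combined with (iii), these pin down $t_{v_r}^*=\tilde e(v_r)$ and yield the displayed inequality.

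For claim (iii), I reduce by induction on $t_2-t_1$ to the one-step version: for every $t\ge\tilde e(v_r)$ and every $\mathbf X[0,t+1]\in\mathcal X(t+1)$, there exists $\mathbf X'[0,t]\in\mathcal X(t)$ with $\Pr(\mathbf X'[0,t])>\Pr(\mathbf X[0,t+1])$. Applied to the maximizer on $\mathcal X(t+1)$, this gives $\max_{\mathcal X(t)}>\max_{\mathcal X(t+1)}$. I split on whether any node changes state between times $t$ and $t+1$. In the easy sub-case (no state change), $\mathbf F(\mathbf X(t))=\mathbf Y$, so the truncation $\mathbf X[0,t]$ lies in $\mathcal X(t)$ with probability $\Pr(\mathbf X[0,t+1])/\Pr(\mathbf X(t+1)\mid\mathbf X(t))$; since $\mathbf Y$ contains at least one infected node that failed to recover in this trivial step, the denominator carries a $(1-p)<1$ factor and the truncation strictly wins.

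In the hard sub-case some node $v$ changes state in the last slot. Here I exploit the slack $t+1>\tilde e(v_r)$: along the unique tree path $v_r=u_0,\dots,u_k=v$ of length $k\le\tilde e(v_r)<t+1$, the $k+1$ strictly increasing infection times $0=\tau_{u_0}^I<\cdots<\tau_{u_k}^I\le t+1$ cannot occupy all $t+2$ integer positions in $\{0,\dots,t+1\}$, so some gap $\tau_{u_{j^*+1}}^I\ge\tau_{u_{j^*}}^I+2$ must exist. Setting $\tau^*:=\tau_{u_{j^*}}^I+1$, I construct $\mathbf X'[0,t]$ by shifting every infection and recovery event strictly after $\tau^*$ one slot earlier and leaving everything else unchanged. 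The gap preserves strict parent--child ordering of infection times, the snapshot at the new terminal time $t$ coincides with $\mathbf Y$, and the probability ratio equals the transition factor of the removed slot, which is strictly less than one because at least one infected node is present and contributes a $(1-p)$ non-recovery factor.

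The main technical obstacle is making this one-slot shift globally consistent across the tree, not only along the chosen chain: a node $w$ off the chain could have $\tau_w^I=\tau^*+1$ and $\tau_{\operatorname{par}(w)}^I=\tau^*$, so shifting $w$ would collide with its unshifted parent. One route is a preliminary pruning step showing that, without loss of generality, one may restrict to sample paths whose only infection events lie in the finite infection subtree $\mathcal T=\bigcup_{u\in\mathcal I}(\text{tree path from }v_r\text{ to }u)$; after pruning the gap slot $\tau^*$ can be rechosen to contain no infection event anywhere, and the global shift becomes collision-free. An alternative is to refine the shift rule to cascade along every infection chain that passes through slot $\tau^*+1$. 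Either way, careful bookkeeping of recovery times — especially for healthy nodes that were infected and then recovered — is needed to verify $\mathbf X'[0,t]\in\mathcal X(t)$ and the strict probability increase; the infinite-depth hypothesis is used only to ensure that nodes never infected in $\mathbf X'$ can remain susceptible indefinitely without conflicting with the snapshot $\mathbf Y$.
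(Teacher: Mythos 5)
Your high-level decomposition (infeasibility of $\mathcal{X}(t)$ for $t<\tilde e(v_r)$, feasibility at $t=\tilde e(v_r)$, strict decrease of the maximum thereafter) matches the logical content of the lemma, and your easy sub-case (truncating a final slot in which no state changes, gaining a factor $1/(1-p)$ from the surviving infected node) is sound. But the hard sub-case---the heart of the proof---has a genuine gap that you flag yourself and that neither of your proposed repairs closes. The global ``delete one time slice'' shift is infeasible not only when a node's infection at $\tau^*+1$ collides with its parent's infection at $\tau^*$, but also when a single node is infected at $\tau^*$ and recovers at $\tau^*+1$ (the shifted path would send it from $S$ to $R$ in one step). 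Your fix (a) rests on a false premise: it is \emph{not} WLOG to restrict to sample paths that infect only nodes on the paths from $v_r$ to $\mathcal{I}$. For a healthy subtree rooted at a child $u$, the best partial sample path has probability $\max\{(1-q)^t,\ qp(1-q)^{|\mathcal{C}(u)|}\}$, and for large $t$ the second term wins, i.e., the optimal sample path deliberately infects and recovers nodes outside the infection subtree. Even granting the pruning, an event-free slot need not exist: the infection subtree can contain far more than $t$ infection/recovery events, so every slot can be occupied. Fix (b) is not developed; if the cascade simply declines to shift a colliding subtree, that subtree's terminal state is read off at time $t$ rather than $t+1$ and generally fails to match $\mathbf{Y}$ (for instance, a node of $\mathcal{I}$ infected exactly at slot $t+1$ would end up uninfected).

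The paper sidesteps all of this by inducting on the depth $k$ of the infection topology rather than manipulating a global time slice: for each child subtree containing infected nodes it decrements the child's infection time by one and translates that subtree's \emph{entire} history rigidly (recursing via the induction hypothesis when the child is already infected at slot $1$), while healthy subtrees and the root's recovery time are handled by explicit computation in the base case. The rigid per-subtree translation preserves every internal timing constraint, so no collisions arise, and it yields a clean $1/(1-q)$ gain per shifted subtree. If you want to keep the slot-deletion idea you would need to localize the shift to the single subtree hanging below the gap, which is essentially the paper's move. Two smaller slips: your feasibility witness at $t=\tilde e(v_r)$ cannot ``assign no recoveries,'' since any healthy node lying on the path from $v_r$ to an infected node must be infected and then recover by time $t$ (always possible, but it must be arranged); and the bound $k\le\tilde e(v_r)$ on the chain to the last-changing node $v$ fails when that change is the recovery of a node farther than $\tilde e(v_r)$ from the root, although $k\le t$ still holds so the pigeonhole argument survives.
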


\begin{proof}
 We start from the case where the time difference of two sample paths is
one, i.e., we will show that
\begin{align}
\max_{{\bf X}[0,t]\in {\cal X}(t)} \Pr({\bf X}[0,t])> \max_{{\bf
X}[0,t+1] \in {\cal X}(t+1)} \Pr({\bf X}[0, t+1]).\label{eqn:3}
\end{align}

We divide all possible infection topologies ${\bf Y}$ into countable
subsets $\{{\cal Y}^k\}$ where ${\cal Y}^k$ is the set of infection
topologies where the largest distance from $v_r$ to an infected node
is $k$. ${\cal Y}^0$ is the topology where there is only one
infected node---the root node $v_r$. Note that if no infected node
is observed, no algorithm performs better than a random guess.  To
prove (\ref{eqn:3}), we use induction over $k$.

{\bf Step 1:} First, we consider the case $k=0$. All the sample paths considered in step 1 lead to observation $ {\bf Y}\in{\cal Y}^0.$ We denote by $T_{v_r}$ the tree rooted in
$v_r$ and $T^{-v_r}_u$ the tree rooted at $u$ but without the branch
from $v_r.$ For example, Figure \ref{figure:timeinequality} shows
$T_{v_1}^{-v_r},$ $T_{v_2}^{-v_r},$ $T_{v_3}^{-v_r}$ and
$T_{v_4}^{-v_r}$. The sample path from time slot $0$ to $t$
restricted to $T_u^{-v_r}$ is denoted by ${\bf
X}([0,t],T_u^{-v_r}).$ Furthermore, denote by ${\cal C}(v)$ the set of children of $v.$ We have
\begin{align*}
&\Pr({\bf X}[0,t])\\
&=\Pr(X_{v_r}(s)=I,0\leq s\leq t)\\
&\times\prod_{u\in{\cal C}(v_r)}\Pr({\bf
X}([0,t],T_u^{-v_r})| X_{v_r}(t)=I)\\
&=(1-p)^t\prod_{u\in{\cal C}(v_r)}\Pr({\bf
X}([0,t],T_u^{-v_r})| X_{v_r}(t)=I),
\end{align*}
where the last equality holds since $v_r$ is the only infected node in the network at time $t,$ which requires $X_{v_r}(s)=I$ for $0\leq s\leq t.$ Node $u\in {\cal C}(v_r)$ has two possible states $S$ or $R.$

{\bf Step 1.a} $u$ is susceptible if it was not infected within $t$ time slots. In each time slot, $v_r$ tries to infect $u$ with probability $q.$ The probability that $u$ is susceptible at time slot $t$ is
\[
(1-q)^t,
\] which implies that
\begin{align}
\Pr({\bf X}([0,t],T_u^{-v_r})| X_{v_r}(t)=I)=(1-q)^t \label{eq: k=0-S}
\end{align} if $X_u(t)=S.$

{\bf Step 1.b} If $u$ is in the recovered state, we denote by $t^I_u$ and $t^R_u$ its infection and recovery times, respectively. Then, we have if $X_u(t)=R,$
\begin{align*}
&\Pr({\bf X}([0,t],T_u^{-v_r})| X_{v_r}(t)=I)\\
&=(1-q)^{t^I_u-1}q(1-p)^{t^R_u-t^I_u-1}p\\
&\prod_{w \in {\cal C}(u)}\Pr\left({\bf
X}([0,t],T_w^{-u})|t^I_u,t^R_u\right),
\end{align*}
where $(1-q)^{t^I_u-1} q(1-p)^{t^R_u-t^I_u-1}p$ is the probability that node
$u$ was infected at time $t^I_u$, and recovered at time $t^R_u.$
Since $T_w^{-u}$ is also an infinite tree, there exists at least one
node $\xi \in T_w^{-u}$ such that the node is in the susceptible
state but its parent node (say node $\gamma$) is in the recovered
state. We denote by $T_w^{-u}\backslash T_{\xi }^{-\gamma}$ the set of nodes that are on subtree $T_w^{-u}$ but not on subtree $T_{\xi}^{-\gamma}.$ Then, \begin{align}
&\Pr\left({\bf X}([0,t],T_w^{-u})|t^I_u,t^R_u\right)\\
&=\Pr\left({\bf X}([0,t],T_w^{-u}\backslash T_{\xi
}^{-\gamma})|t^I_u,t^R_u\right)\\
&\times \Pr\left({\bf X}([0,t], T_{\xi
}^{-\gamma})|t^I_{\gamma},t^R_{\gamma}\right)\\
&=\Pr\left({\bf X}([0,t],T_w^{-u}\backslash T_{\xi
}^{-\gamma})|t^I_u,t^R_u\right)(1-q)^{t^R_{\gamma}-t^I_{\gamma}}\label{eqn:opt}\\
&\leq (1-q)\label{eqn:optrequirement},
\end{align}
where equation (\ref{eqn:opt}) holds because $\xi$ remained to be
susceptible during the time slots at which $\gamma$ was in the infected state and
(\ref{eqn:optrequirement}) holds because
$t^R_{\gamma}-t^I_{\gamma}\geq1.$ The maximum value of
$\Pr\left({\bf X}([0,t],T_w^{-u})|t^I_u,t^R_u\right)$ can be
achieved in the sample path in which $u$ was infected and then
recovered in the next time slot so that $w$ was
vulnerable to infection only in one time slot. Furthermore,
\[
(1-q)^{t^I_u-1}q(1-p)^{t^R_u-t^I_u-1}p
\]
is maximized when $t^{I}_u=1,t^{R}_u=2$ i.e., $u$ was infected
at the first time slot and recovered in the second time slot. Therefore, if $X_u(t)=R,$
\begin{align}
&\Pr({\bf X}([0,t],T_u^{-v_r})| X_{v_r}(t)=I) \leq qp(1-q)^{|{\cal C}(u)|}. \label{eq: k=0-R}
\end{align}

{\bf Step 1.c} Define ${\bf X}^*[0,t]$ to be the optimal solution to
\[
\max_{{\bf X}[0,t]\in {\cal X}(t)} \Pr({\bf X}[0,t]).
\]
For $t=1,$ since all $u\in {\cal C}(v_r)$ are in the susceptible state,
\begin{align}
\Pr\left({\bf X}^*([0,t]\right)=(1-p)(1-q)^{|{\cal C}(v_r)|}.
\end{align} For $t\geq 2,$ according to (\ref{eq: k=0-S}) and (\ref{eq: k=0-R}),
\begin{align}
&\Pr\left({\bf X}^*([0,t]\right)\\
&=(1-p)^t \prod_{u \in {\cal
C}(v_r)}\max\left\{(1-q)^{t},qp(1-q)^{|{\cal
C}(u)|}\right\}\label{eqn:t2}.
\end{align}
Note that $t$ is fixed in this optimization problem and
(\ref{eqn:t2}) is a none-increasing function of $t$. Since $|{\cal
C}(u)|\geq 1$,
\begin{align*}
&\Pr\left({\bf X}^*([0,2]\right)\\
&\leq (1-p)^2 \max\left\{(1-q)^{2|{\cal
C}(v_r)|},\left(qp(1-q)\right)^{|{\cal C}(v_r)|}\right\}\\
&< (1-p)(1-q)^{|{\cal C}(v_r)|}\\
&=\Pr\left({\bf X}^*[0,1]\right).
\end{align*}
In a summary, $\Pr\left({\bf X}^*([0,t]\right)$ is a none-increasing
function of $t\in[1,\infty)$ when $k=0.$

\begin{figure}
\begin{centering}
  \includegraphics[width=200 pt]{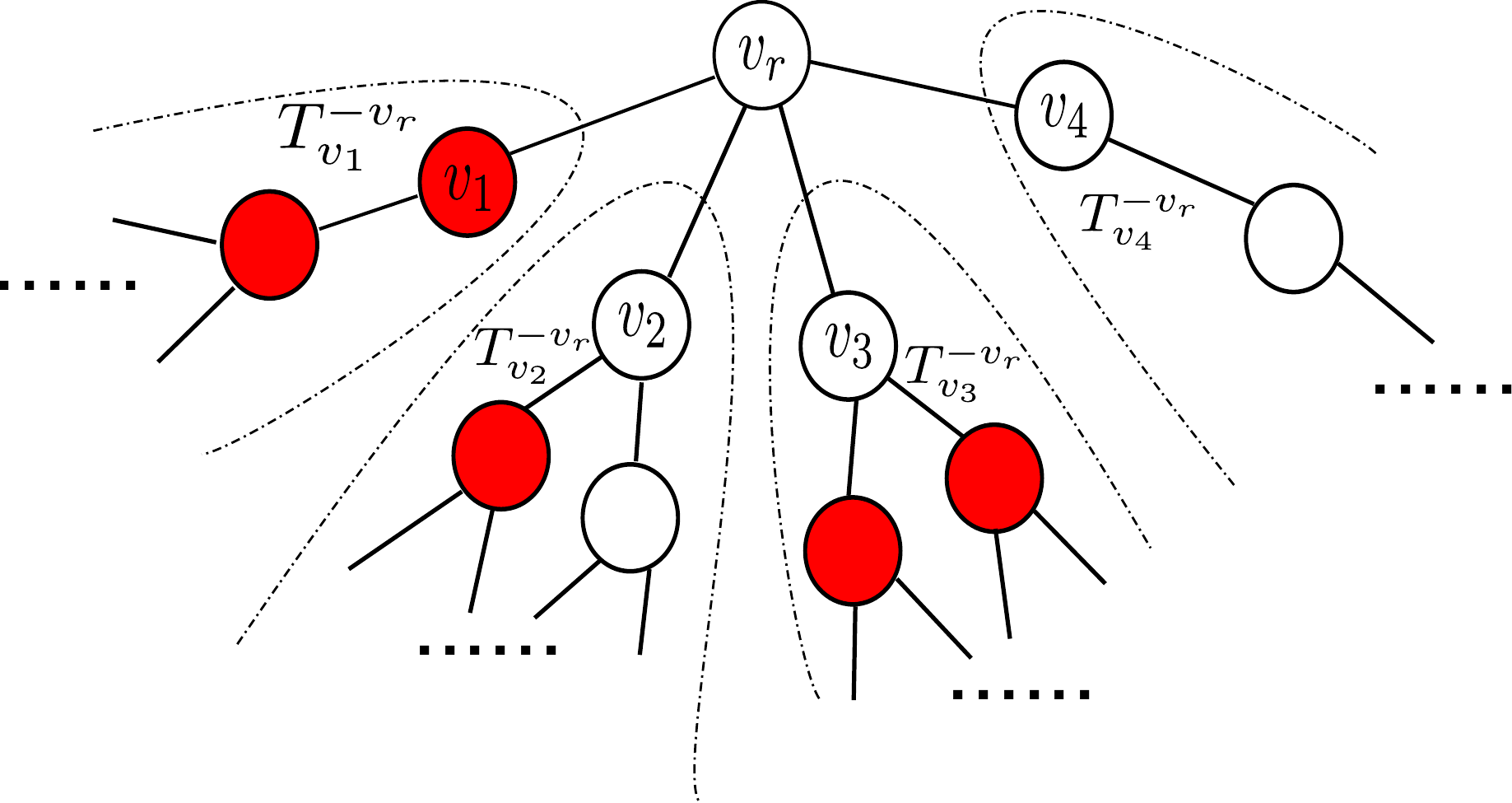}\\
  \caption{Example of Lemma \ref{lem:inequalitytime}}\label{figure:timeinequality}
  \end{centering}
\end{figure}
{\bf Step 2:} Assume (\ref{eqn:3}) holds for $k\leq n,$ and consider $k=n+1.$ Clearly $t\geq n+1\geq 1$ for each ${\bf X}[0,t]$ such that $${\bf F}({\bf X}[0,t])\in {\cal Y}^{n+1}.$$ Furthermore, the set of
subtrees ${\cal T}=\{T^{-v_r}_{u}|u \in {\cal C}(v_r)\}$ are
divided into two subsets: ${\cal T}^h=\{T^{-v_r}_{u}|u \in {\cal
C}(v_r), {\bf Y}(T^{-v_r}_{u})\cap{\cal I}=\emptyset\}$ and ${\cal
T}^i={\cal T}\backslash {\cal T}^h,$ where ${\bf Y}(T^{-v_r}_{u})$ is the vector of $\bf Y$ restricted to subtree $T^{-v_r}_u.$ In Figure 3, ${\cal
T}^h=\{T^{-v_r}_{v_4}\}$ and ${\cal
T}^i=\{T^{-v_r}_{v_1},\ T^{-v_r}_{v_2},\ T^{-v_r}_{v_3}\}.$ We note that given  $t_{v_r}^R,$ the infection
processes on the sub-trees are mutually independent.

{\bf Step 2.a} Recall that
${\cal T}^h$ is the set of subtrees having no infected nodes.
Following the argument for the $k=0$ case, we can obtain that if $T_u^{-v_r}\in {\cal T}^h,$ then
\begin{align*}
&\Pr({\bf X}^*([0,t],{T}^{-v_r}_u)|t_{v_r}^R)=\max\left\{(1-q)^{t_{v_r}^R},qp(1-q)^{|{\cal
C}(u)|}\right\}
\end{align*} when $t\geq t_{v_r}^R$ and
\begin{align*}
&\Pr({\bf X}^*([0,t],{T}^{-v_r}_u)|t_{v_r}^R)= \max\left\{(1-q)^{t},qp(1-q)^{|{\cal
C}(u)|}\right\}
\end{align*} when $t< t_{v_r}^R.$ So $\Pr({\bf X}([0,t],{T}^{-v_r}_u)|t_{v_r}^R)$ is non-increasing in $t$ given any $t_{v_r}^R.$

{\bf Step 2.b} For $T_u^{-v_r}\in {\cal T}^i$, given the sample path $\tilde{{\bf X}}([0,t+1], T_u^{-v_r}),$ we will
construct a sample path ${\bf X}([0,t], T_u^{-v_r})$ which occurs with a higher
probability. Denote the infection time of $u$ in sample path ${\bf X}([0,t], T_u^{-v_r})$  by $t^I_u$. We let $\tilde{t}^I_{u}$ denote the
infection time in sample path $\tilde{{\bf X}}([0,t+1], T_u^{-v_r}).$

If $\tilde{t}^I_{u}>1$, we choose $t^I_{u}=\tilde{t}^I_{u}-1$, i.e.,
$u$ is infected one time slot later in $\tilde{{\bf X}}[0,
t+1]$ than that in ${\bf X}[0,t].$  Assume the infection processes after $u$ was infected
are the same in the two sample paths ${\bf X}([0,t], T_u^{-v_r})$ and $\tilde{{\bf X}}([0,t+1], T_u^{-v_r}).$ Therefore, we have
$$\Pr(\tilde{{\bf X}}([0,t+1],T^{-v_r}_{u}))=(1-q)^{\tilde{t}^I_{u}-1}q\Pr(\tilde{{\bf
X}}([0,t+1],T^{-v_r}_{u})|\tilde{t}_{u}^I),$$ and $$\Pr({\bf
X}([0,t],T^{-v_r}_{u}))=(1-q)^{t^I_{u}-1}q\Pr({\bf
X}([0,t],T^{-v_r}_{u})|t^I_{u}).$$ where $\Pr({\bf
X}([0,t],T^{-v_r}_{u})|t^I_{u})$ is the probability of ${\bf
X}([0,t],T^{-v_r}_{u})$  after $u$ was infected. Since the sample paths
${\bf X}([0,t], T_u^{-v_r})$ and $\tilde{{\bf X}}([0,t+1], T_u^{-v_r})$ are the same after
$u$ was infected, we obtain
\[
\Pr({\bf X}([0,t],T^{-v_r}_{u})|t^I_{u})=\Pr(\tilde{{\bf
X}}([0,t+1],T^{-v_r}_{u})|\tilde{t}_{u}^I).
\]
Therefore, with $t^I_{u}=\tilde{t}^I_{u}-1$, we get
\begin{align*}
&\Pr(\tilde{{\bf X}}([0,t+1],T^{-v_r}_{u}))<\Pr({\bf
X}([0,t],T^{-v_r}_{u}))
\end{align*}
 If $\tilde{t}^I_{u}=1,$ we set
$t^I_{u}=\tilde{t}^I_{u}=1$.\footnote{Note that we cannot apply the same argument to $\tilde{t}^I_u>1$ because $t^I_u=\tilde{t}^I_u$ may not be feasible in a valid ${\bf X}[0, t].$} Based on the induction assumption,
for $k\leq n$ since ${\bf Y}(T^{-v_r}_{u})\in {\cal Y}^m, m\leq n,$ we have
\begin{align*}
&\max_{{\bf X}([0,t],T^{-v_r}_{u})\in {\cal X}(t,T^{-v_r}_{u})}
\Pr({\bf X}([0,t],T^{-v_r}_{u}))\\
&> \max_{\tilde{\bf X}([0,t+1],T^{-v_r}_{u}) \in {\cal
X}(t+1,T^{-v_r}_{u})} \Pr(\tilde{\bf X}([0,t+1],T^{-v_r}_{u})),
\end{align*} where ${\cal
X}(t,T^{-v_r}_{u})=\{{\bf X}([0,t],T^{-v_r}_{u}): {\bf F}({\bf X}([0,t],T^{-v_r}_{u}))={\bf Y}(T^{-v_r}_u) \}.$
Therefore, given any $\tilde{\bf X}([0,t+1],T^{-v_r}_{u}),$ we can always find a corresponding sample path ${\bf X}([0,t],T^{-v_r}_{u}),$ which occurs with a higher probability.

{\bf Step 2.c} Now we consider the sample path ${\bf X}^*[0,t+1]$ and denote by $\tilde{t}_{v_r}^R$ the recovery time of node $v_r$ in ${\bf X}^*[0,t+1]$. We now construct a sample path $\bar{\bf X}[0,t]$ as follows:
\begin{itemize}
\item If $\tilde{t}_{v_r}^R>t+1,$ i.e., $v_r$ is an infected node, then $\bar{t}_{v_r}^R>t,$ where $\bar{t}_{v_r}^R$ is the recovery time of $v_r$ in $\bar{\bf X}[0,t].$

\item If $\tilde{t}_{v_r}^R\leq t,$ we choose $\bar{t}_{v_r}^R=\tilde{t}_{v_r}^R.$

\item If $\tilde{t}_{v_r}^R= t+1,$ we choose $\bar{t}_{v_r}^R=t.$

\end{itemize}We further complete $\bar{\bf X}[0,t]$ by having optimal ones on ${\cal T}^h$ and constructing the ones in ${\cal T}^i$ following step 2.b. According to steps 2.a and 2.b, it is easy to verify that $\bar{\bf X}[0,t]$ occurs with a higher probability than ${\bf X}^*[0,t+1].$ Therefore,
we conclude that inequality
(\ref{eqn:3}) holds for $k=n+1,$ hence for any $k$ according to the principle of induction.

{\bf Step 3} Repeatedly applying inequality (\ref{eqn:3}), we obtain
that $t^*_{v_r}$ is the minimum amount of time required to produce the observed infection
topology. The minimum time required is equal to the maximum distance from
$v_r$ to an infected node. Therefore, the lemma holds.
\end{proof}

\subsection{The Sample Path Based Estimator}
After deriving $t_v^*$, we have a unique $t_v^*$ for each
$v\in {\cal V}$. The next lemma states that the optimal sample path starting from a node with a smaller infection eccentricity is more likely to occur.
\begin{lemma}\label{lem:mainlemma}
Consider a tree network with infinitely many levels. Assume the information source is the root, and the observed
infection topology is ${\bf Y}$ which contains at least one infected node. For $u,v \in {\cal V}$ such that $(u,v)\in
{\cal E}$, if $t^*_u>t^*_v,$ then
\begin{align*}
\Pr({\bf X}^*_u([0,t^*_u])) < \Pr({\bf X}^*_v([0,t^*_v])),
\end{align*}
where ${\bf X}^*_u[0, t_u^*]$ is the optimal sample path starting from node $u.$
\end{lemma}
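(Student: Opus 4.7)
The plan is to compare the two optimal sample paths via a probability factorization along the edge $(u,v)$ and a single application of Lemma~\ref{lem:inequalitytime} to the sub-tree on the $u$-side of that edge.

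First, I would establish the structural setup. Since $u$ and $v$ are adjacent in a tree, $|\tilde{e}(u)-\tilde{e}(v)|\leq 1$, so by Lemma~\ref{lem:inequalitytime} the hypothesis $t^*_u>t^*_v$ forces $t^*_u = t^*_v+1$. Furthermore, any infected node $w$ achieving $d(u,w)=t^*_u$ must lie in $T_v^{-u}$: were it in $T_u^{-v}$, we would have $d(v,w)=d(u,w)+1=t^*_u+1>t^*_v$, contradicting $\tilde{e}(v)=t^*_v$. Consequently, the infection eccentricity of $u$ restricted to $T_u^{-v}$---the maximum distance from $u$ to an infected node inside $T_u^{-v}$, which I denote $\tilde{e}_{T_u^{-v}}(u)$---is at most $t^*_v-1 = t^*_u-2$.

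Next, I would decompose the probability of any sample path rooted at $u$ (or $v$) into three pieces that multiply as independent factors, thanks to the tree cutting cleanly at the edge $(u,v)$: the trajectory on $T_u^{-v}$ (including $u$'s trajectory), the $u$--$v$ cross-edge infection attempts, and the trajectory on $T_v^{-u}$ (including $v$'s trajectory, conditioned on $v$'s infection time). For $\mathbf{X}^*_u$, the infection must traverse $u\to v\to\cdots\to w$ at one hop per slot to reach $w\in T_v^{-u}$ within $t^*_u$ slots, so necessarily $t^I_v=1$; the cross-edge factor reduces to $q$, and the $T_v^{-u}$-factor equals the maximum probability $Q_v(t^*_v)$ of a sample path on $T_v^{-u}$ rooted at $v$ of duration $t^*_v$. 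Writing $Q_u(\tau)$ for the analogous maximum on $T_u^{-v}$ rooted at $u$ of duration $\tau$, we obtain $\Pr(\mathbf{X}^*_u) = Q_u(t^*_u)\cdot q\cdot Q_v(t^*_v)$.

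Then I would construct a candidate $\mathbf{X}_v[0,t^*_v]$ rooted at $v$ by mirroring this factorization: transplant the $T_v^{-u}$-restriction of $\mathbf{X}^*_u$ (shifted by $-1$ so $v$ is infected at time $0$), let $v$ infect $u$ at time $1$, and complete the $T_u^{-v}$-restriction with an optimal trajectory rooted at $u$ of effective duration $t^*_v-1 = t^*_u-2$. This is feasible because $t^*_u-2\geq \tilde{e}_{T_u^{-v}}(u)$. By the symmetric factorization, $\Pr(\mathbf{X}_v) = Q_u(t^*_u-2)\cdot q\cdot Q_v(t^*_v)$. Applying Lemma~\ref{lem:inequalitytime} to the infinite sub-tree $T_u^{-v}$ rooted at $u$ with infection topology $\mathbf{Y}|_{T_u^{-v}}$ yields $Q_u(t^*_u-2)>Q_u(t^*_u)$, since both durations dominate $\tilde{e}_{T_u^{-v}}(u)$. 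Cancelling the common factors $q$ and $Q_v(t^*_v)$ gives $\Pr(\mathbf{X}_v)>\Pr(\mathbf{X}^*_u)$, and the optimality of $\mathbf{X}^*_v$ gives $\Pr(\mathbf{X}^*_v)\geq\Pr(\mathbf{X}_v)>\Pr(\mathbf{X}^*_u)$.

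I expect the main obstacle to be the degenerate configuration in which $\mathbf{Y}$ forces $Y_u=0$ while $t^*_v$ is so small that $u$ cannot be infected at time $1$ and recovered by time $t^*_v$ (concretely $t^*_v=1$ with $T_u^{-v}$ containing no infected node). In this case one instead takes $u$ to remain susceptible throughout $\mathbf{X}_v$, so the cross-edge factor becomes $(1-q)^{t^R_v}\leq 1-q$ and the $T_u^{-v}$-factor is just $1$; one then recovers $\Pr(\mathbf{X}_v)>\Pr(\mathbf{X}^*_u)$ from the estimate $Q_u(t^*_u)\leq p(1-q)<(1-q)/q$ (valid whenever $pq<1$), which follows on an infinite tree from the same analysis used in Step~1 of the proof of Lemma~\ref{lem:inequalitytime}. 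The remaining bookkeeping---verifying that the sub-tree $T_u^{-v}$ inherits infinitely many levels and contains at least one infected node (or else using the direct extension of Lemma~\ref{lem:inequalitytime}'s proof)---is routine.
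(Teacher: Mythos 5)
Your main argument reproduces the paper's proof almost step for step: you establish $t^*_u=t^*_v+1$ and the forced infection time $t^I_v=1$ exactly as in the paper's Steps 1 and 2, and your Step 3 (shift the $T_v^{-u}$ restriction of ${\bf X}^*_u$ back by one slot, re-optimize the $T_u^{-v}$ restriction, and invoke Lemma~\ref{lem:inequalitytime} on the subtree $T_u^{-v}$ with the two durations $t^*_u-2$ and $t^*_u$, both dominating the restricted infection eccentricity) is the paper's Step 3 written as an explicit three-factor product with the common factors cancelled. In the main case, where $T_u^{-v}\cap{\cal I}\neq\emptyset$, your argument is correct and is essentially the paper's.

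The caveat you raise at the end, however, is both real and broader than you state. Lemma~\ref{lem:inequalitytime} cannot be applied to $T_u^{-v}$ whenever $T_u^{-v}\cap{\cal I}=\emptyset$, not only when $t^*_v=1$: the lemma requires at least one infected node, and the ``direct extension'' of its proof to an infection-free subtree (Step 2.a there) gives only $\max\{(1-q)^{t},qp(1-q)^{|{\cal C}(u)|}\}$, which is non-increasing but eventually \emph{constant} in $t$, so you obtain $Q_u(t^*_u-2)\geq Q_u(t^*_u)$ rather than a strict inequality. Your patch handles $t^*_v=1$ correctly (there $u$ is exposed exactly once and the bound $q\,Q_u(t^*_u)\leq qp(1-q)<1-q$ closes the argument), but it does not cover $T_u^{-v}\cap{\cal I}=\emptyset$ with $t^*_v\geq 2$. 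In that configuration, if $Y_v=1$ and $t^*_v$ is large enough that $(1-q)^{t^*_v}<qp(1-q)^{\deg(u)-1}$, the best $u$-side option for a sample path rooted at $v$ is ``infect $u$ at slot $1$ and recover it at slot $2$,'' worth $qp(1-q)^{\deg(u)-1}$, which exactly equals $q$ times the optimal $u$-side factor $p(1-q)^{\deg(u)-1}$ of ${\bf X}^*_u$; the comparison then yields $\Pr({\bf X}^*_v)=\Pr({\bf X}^*_u)$, not a strict inequality, so the claimed conclusion is not recovered by this route. To be fair, the paper's own proof invokes Lemma~\ref{lem:inequalitytime} on $T_u^{-v}$ without checking its hypothesis either, so this is a gap you share with (and partially diagnose better than) the published argument rather than one you introduced; but as written, neither your patch nor an appeal to ``routine bookkeeping'' closes it.
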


\begin{proof}
Recall that $T_{v}$ denotes the tree rooted at $v$ and $T^{-v}_u$ denotes the tree rooted
at $u$ but without the branch from $v.$ Furthermore, ${\cal C}(v)$ is
the set of children of $v,$ and ${\bf X}([0,t],T_u^{-v})$ is the sample
path ${\bf X}[0,t]$ restricted to $T_u^{-v}.$

{\bf Step 1:} The first step is to show $t^*_u=t^*_v+1.$ First we
claim $T^{-u}_v \cap {\cal I}\neq \emptyset.$ Otherwise, all
infected node are on $T^{-v}_u.$ Since on a tree, $v$ can only reach
nodes in $T^{-v}_u$ through edge $(u,v),$ $t^*_v=t^*_u+1,$ which
contradicts $t^*_u>t^*_v.$

If $T^{-v}_u \cap {\cal I}\neq \emptyset,$ $\forall a \in T^{-v}_u
\cap {\cal I}$, we have $$d(u,a)=d(v,a)-1\leq t^*_v-1,$$ and
$\forall b \in T^{-u}_v \cap {\cal I}$, $$d(u,b)=d(v,b)+1\leq
t^*_v+1.$$ Hence, $$t^*_u\leq t_v^*+1,$$ which implies that
$$t^*_v<t^*_u\leq t^*_v+1,$$ i.e., $t^*_u=t^*_v+1.$

If $T_u^{-v}\cap{\cal I}=\emptyset,$ all infected nodes are in $T_v^{-u},$ so it is obvious $t^*_u=t^*_v+1.$

{\bf Step 2:} In this step, we will prove that $t^I_v=1$ on the
sample path ${\bf X}_u^*[0, t^*_u].$ If $t^I_v>1$ on ${\bf X}_u^*([0,t_u^*]),$ then
$$t^*_u-t^I_v=t^*_v+1-t^I_v<t^*_v.$$ Note that according to the
 definition of $t_u^*$ and $t_v^I,$ within $t_u^*-t_v^I$ time slots,
 node $v$ can infect all infected nodes on $T_v^{-u}.$
Since $t^*_u=t^*_v+1,$ the infected node farthest from node $u$ must be on $T_v^{-u},$ which implies that there exists a node $a \in
T^{-u}_v$ such that $d(u,a)=t^*_u=t^*_v+1$ and $d(v,a)=t^*_v.$  So node
$v$ cannot reach $a$ within $t^*_u-t^I_v$ time slots, which
contradicts the fact that the infection can spread from node $v$ to
$a$ within $t_u^*-t_v^I$ time slots along the sample path $X_u^*[0,
t_u^*].$ Therefore, $t^I_v=1.$

{\bf Step 3:} Now given sample path ${\bf X}^*_u[0,t^*_u]$, we
construct ${\bf X}_v[0,t^*_v]$ which occurs with a higher
probability. We divide the sample path ${\bf X}^*_u[0,t^*_u]$ into
two parts along subtrees $T_u^{-v}$ and $T_v^{-u}.$ Since $t_v^I=1,$
we have
\begin{align*}
&\Pr({\bf X}^*_u[0,t^*_u])\\
&=q\Pr\left({\bf
X}^*_u\left([0,t^*_u],T^{-u}_v\right)\Big|t^I_v=1\right)\Pr\left({\bf
X}^*_u\left([0,t^*_u],T^{-v}_u\right)\right),
\end{align*}
where $q$ is the probability that $v$ is infected at the first time
slot. Suppose in ${\bf X}_v[0,t^*_v]$, node $u$ was infected at the
first time slot, then
\begin{align*}
&\Pr({\bf X}_v[0,t^*_v])=\\
&q\Pr\left({\bf
X}_v\left([0,t^*_v],T^{-u}_v\right)\right)\Pr\left({\bf
X}_v\left([0,t^*_v],T^{-v}_u\right)\Big|t^I_u=1\right).
\end{align*}

For the subtree $T_v^{-u},$ given ${\bf X}^*_u\left([0,t^*_u],T^{-u}_v\right),$ in which
$t_v^I=1$, we construct the partial sample path ${\bf
X}_v\left([0,t^*_v],T^{-u}_v\right)$ to be identical to ${\bf X}^*_u\left([0, t_u^*], T_v^{-u}\right)$ except that all events occur one time slot earlier, i.e.,
$${\bf X}_v\left([0,t^*_v],T^{-u}_v\right)={\bf X}^*_u\left([1, t_u^*], T_v^{-u}\right).$$
This
is feasible because $t_v^*=t_u^*-1.$ Then
$$\Pr\left({\bf X}^*_u\left([0,t^*_u],T^{-u}_v\right)\Big|t^I_v=1\right)=\Pr\left({\bf
X}_v\left([0,t^*_v],T^{-u}_v\right)\right).$$

For the subtree $T_u^{-v}$, we construct ${\bf X}_v([0,t^*_v],T_u^{-v})$ such that
\begin{align*}
&{\bf X}_v([0,t^*_v],T_u^{-v})\in \\
&{\arg\max}_{\tilde{\bf X}([0,t^*_v],T_u^{-v})\in{\cal
X}(t^*_v,T_u^{-v})}\Pr\left(\tilde{\bf
X}\left([0,t^*_v],T^{-v}_u\right)\Big|t^I_u=1\right).
\end{align*} Based on Lemma \ref{lem:inequalitytime}, we have
\begin{align*}
&\max_{\tilde{\bf X}([0,t^*_v],T_u^{-v})\in{\cal X}(t^*_v,T_u^{-v})}\Pr\left(\tilde{\bf X}\left([0,t^*_v],T^{-v}_u\right)\Big|t^I_u=1\right)=\\
&\max_{\tilde{\bf X}([0,t^*_u-1],T_u^{-v})\in{\cal X}(t^*_u-1,T_u^{-v})}\Pr\left(\tilde{\bf X}\left([0,t^*_u-1],T^{-v}_u\right)\Big|t^I_u=1\right)\\
&> \max_{{\bf X}([0,t^*_u],T_u^{-v})\in{\cal
X}(t^*_u,T_u^{-v})}\Pr\left({\bf
X}\left([0,t^*_u],T^{-v}_u\right)\right).
\end{align*}
 Therefore, given the optimal sample
path rooted at $u$, we have constructed a sample path rooted at $v$ which
occurs with a higher probability. The lemma holds.
\end{proof}
Next, we give a useful property of the Jordan infection centers in
the following lemma.
\begin{lemma}\label{lem:numofcenters}
On a tree network with at least one infected node, there exist at most two
Jordan infection centers. When the network has two Jordan infection
centers, the two must be neighbors. \hfill{$\square$}
\end{lemma}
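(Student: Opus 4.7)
The plan is to mirror the classical proof that every tree has at most two graph-theoretic centers, but with eccentricity replaced by infection eccentricity so that only infected nodes (rather than all vertices) participate in the maximization. Concretely, I will first show that any two Jordan infection centers must be adjacent, and then deduce the count bound from the fact that a tree contains no triangle.

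For the adjacency claim, let $v_1$ and $v_2$ be any two Jordan infection centers, so $\tilde{e}(v_1)=\tilde{e}(v_2)=r$. Assume for contradiction that $d(v_1,v_2)\geq 2$, and let $w$ be the unique neighbor of $v_1$ lying on the path from $v_1$ to $v_2$. Removing the edge $(v_1,w)$ splits the tree into two components; call $A$ the set of infected nodes in the component containing $w$ (and $v_2$), and $B$ the set of infected nodes in the component containing $v_1$. For $u\in A$, the path from $v_1$ to $u$ passes through $w$, so $d(w,u)=d(v_1,u)-1\leq r-1$. For $u\in B$, the path from $v_2$ to $u$ passes through $v_1$, giving $d(v_2,u)=d(v_2,v_1)+d(v_1,u)\geq 2+d(v_1,u)$; combined with $\tilde{e}(v_2)\leq r$ this yields $d(v_1,u)\leq r-2$, hence $d(w,u)=d(v_1,u)+1\leq r-1$. (The boundary cases $A=\emptyset$ or $B=\emptyset$ collapse into one of these two estimates and yield the same conclusion.) Taking the maximum over all infected nodes gives $\tilde{e}(w)\leq r-1<r$, contradicting the minimality of $\tilde{e}(v_1)$. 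Therefore $d(v_1,v_2)=1$.

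For the count, suppose there were three distinct Jordan infection centers $v_1,v_2,v_3$. By the adjacency claim just established, each pair among them would have to be connected by an edge, producing a triangle $v_1v_2v_3$; this is impossible in a tree. Hence there are at most two Jordan infection centers, and if there are two, they are neighbors. I expect the only delicate point to be handling the degenerate case in which all infected nodes fall on a single side of the edge $(v_1,w)$, but this is handled uniformly by the two estimates above, so no real obstacle arises.
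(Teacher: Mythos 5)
Your proof is correct and follows essentially the same route as the paper's: take the neighbor $w$ of one center on the path toward the other, split the infected nodes by the separating edge, bound the distance from $w$ to each side using the eccentricity of one of the two centers, and conclude $\tilde{e}(w)\leq r-1$, a contradiction; the triangle/clique argument for the count is identical. The only cosmetic difference is that you fold the paper's separate "one side has no infected nodes" case into the two uniform estimates, which is fine.
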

\begin{proof}
First, we claim if there are more than one Jordan infection centers,
they must be adjacent. Suppose $v,u\in{\cal V}$ are two
Jordan infection centers and $\tilde{e}(v)=\tilde{e}(u)=\lambda.$
Suppose $v$ and $u$ are not adjacent, i.e., $d(v,u)>1$. Then, there exists
$w \in{\cal V}$ such that $$d(w,u)=1,$$ and $$d(w,v)=d(v,u)-1,$$
i.e., $w$ is a neighbor of $u$ and is on the shortest path between
$u$ and $v.$ Note in a tree structure $w$ is unique.

If ${\cal I}\cap T^{-w}_u=\emptyset,$ then $\forall a \in {\cal I},$
$$d(w,a)=d(u,a)-1<d(u,a),$$ which contradicts the fact that $u$ is a
 Jordan infection center.

If ${\cal I}\cap T^{-w}_u\neq \emptyset.$ Since $\forall b \in {\cal
I}\cap T^{-w}_u,$
$$d(v,b)=d(v,w)+d(w,b),$$ i.e., $$d(w,b)=d(v,b)-d(v,w)\leq
\lambda-1.$$ On the other hand, since $\tilde{e}(u)=\lambda,$
$\forall h \in T^{-u}_w\cap {\cal I},$ $$d(w,h)=d(u,h)-1\leq
\lambda-1.$$ In a summary, $\forall h \in {\cal I},$ $$d(w,h)\leq
\lambda-1,$$ which contradicts the fact that the minimum infection
eccentricity is $\lambda.$

 Therefore all Jordan infection centers
must be adjacent to each other. However, suppose there exist $n$
infection eccentricity centers where $n>2$, they would form a
clique with $n$ nodes which contradicts the fact that the graph is a tree.
Therefore, there exist at most two adjacent Jordan infection
centers.
\end{proof}

Based on Lemma \ref{lem:mainlemma} and Lemma \ref{lem:numofcenters},
we finish this section with the following theorem.
\begin{thm}\label{thm:OSPD source}
Consider a tree network with infinitely many levels. Assume that the observed
infection topology ${\bf Y}$ contains at least one infected node.  Then
the source node associated with ${\bf X}^*[0, t^*]$ (the solution
to the optimization problem (\ref{eqn:optsamplepath})) is a Jordan infection center, i.e.,
\[
v^{\dag}=\arg\min_{v\in {\cal V}}\tilde{e}(v).
\]
\end{thm}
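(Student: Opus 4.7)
The plan is to combine Lemmas~\ref{lem:inequalitytime} and~\ref{lem:mainlemma} via a descent argument. Lemma~\ref{lem:inequalitytime} identifies the optimal time as $t^*_v=\tilde{e}(v)$ for every candidate source $v$, so two candidates $u,v$ satisfy $t^*_u>t^*_v$ precisely when $\tilde{e}(u)>\tilde{e}(v)$. Lemma~\ref{lem:mainlemma} then says that along any edge $(u,v)$ with $\tilde{e}(u)>\tilde{e}(v)$, the optimal sample path rooted at $v$ has strictly larger probability than the one rooted at $u$. Therefore it suffices to prove the following structural claim: from any $u$ that is not a Jordan infection center there is a sequence of neighbors $u=u_0,u_1,\dots,u_m$ ending at a Jordan infection center along which $\tilde{e}$ strictly decreases, because then iterated application of Lemma~\ref{lem:mainlemma} exhibits a source with a strictly more probable optimal sample path than $u$, forcing $v^\dag$ to be a Jordan infection center.

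The core subclaim is that every non--Jordan-infection-center $u$ has a neighbor $w$ with $\tilde{e}(w)<\tilde{e}(u)$. To prove it, I would split ${\cal I}$ according to a chosen neighbor $w$ of $u$: let $f(u\to w)=\max\{d(u,b):b\in{\cal I}\cap T^{-u}_w\}$ and $g(u,w)=\max\{d(u,b):b\in{\cal I}\setminus T^{-u}_w\}$, with the convention that an empty max is $-\infty$. Because the edge $(u,w)$ shortens the distance to nodes in $T^{-u}_w$ by one and lengthens the distance to the remaining infected nodes by one,
\[
\tilde{e}(u)=\max\{f(u\to w),\,g(u,w)\},\qquad \tilde{e}(w)=\max\{f(u\to w)-1,\,g(u,w)+1\}.
\]
A short case analysis on these two formulas yields $\tilde{e}(w)<\tilde{e}(u)$ if and only if $f(u\to w)\ge g(u,w)+2$.

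Suppose for contradiction that no neighbor $w$ satisfies this inequality, so $f(u\to w)\le g(u,w)+1$ for every $w$. Pick $w^*$ with $f(u\to w^*)=\tilde{e}(u)$; then $g(u,w^*)\ge \tilde{e}(u)-1$, producing an infected node $b$ in a subtree of $u$ other than the $w^*$ one at distance at least $\tilde{e}(u)-1$, together with a farthest infected node $a$ in direction $w^*$ at distance $\tilde{e}(u)$. Since $u$ lies on the unique tree path from $a$ to $b$, the diameter $D$ of ${\cal I}$ satisfies $D\ge 2\tilde{e}(u)-1$. The triangle inequality then forces $\tilde{e}(z)\ge\lceil D/2\rceil\ge \tilde{e}(u)$ for every node $z$, so $u$ itself achieves the minimum infection eccentricity, contradicting the hypothesis that $u$ is not a Jordan infection center. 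This proves the subclaim and, via the iterated descent explained in the first paragraph, the theorem.

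I expect the main obstacle to be the careful bookkeeping in the decomposition $\tilde{e}(w)=\max\{f(u\to w)-1,g(u,w)+1\}$, in particular handling the degenerate cases where one side of $u$ contains no infected node so that the corresponding maximum is $-\infty$ and must not spuriously control $\tilde{e}(w)$. Everything else reduces either to Lemmas~\ref{lem:inequalitytime}--\ref{lem:mainlemma} or to elementary tree geometry (uniqueness of paths and the diameter/eccentricity bound $\tilde{e}(z)\ge\lceil D/2\rceil$), so I do not expect further conceptual difficulty.
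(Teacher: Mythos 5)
Your proof is correct, and its skeleton --- reduce the theorem, via Lemma~\ref{lem:inequalitytime} (which gives $t^*_v=\tilde e(v)$) and Lemma~\ref{lem:mainlemma}, to the purely combinatorial claim that from every node that is not a Jordan infection center there is a neighbor-to-neighbor path to a Jordan infection center along which $\tilde e$ strictly decreases --- is exactly the paper's. Where you differ is in how you establish that combinatorial claim. The paper argues globally: it first invokes Lemma~\ref{lem:numofcenters} to locate the (at most two, adjacent) Jordan infection centers $u,w$ with eccentricity $\lambda$, shows there is an infected node at distance exactly $\lambda-1$ from $w$ inside $T_w^{-u}$ (and symmetrically for $u$), and then computes $\tilde e(a)=\lambda+d(a,u)$ for every other node $a$, so that the unique tree path from $a$ to the nearer Jordan center is the desired monotone path. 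You argue locally: the decomposition $\tilde e(u)=\max\{f,g\}$, $\tilde e(w)=\max\{f-1,g+1\}$ shows a neighbor improves iff $f\ge g+2$, and if no neighbor improves you exhibit two infected nodes separated by $u$ on the tree, forcing the infected-set diameter to satisfy $D\ge 2\tilde e(u)-1$ and hence, by the radius--diameter inequality $\tilde e(z)\ge\lceil D/2\rceil$, that $u$ already minimizes $\tilde e$ --- a contradiction. Your route does not need Lemma~\ref{lem:numofcenters} at all and handles the one-center and two-center cases uniformly; it yields slightly less information than the paper's computation (which shows $\tilde e$ decreases by exactly one per step and identifies the monotone path as the path to the center), but the case analysis checks out, the degenerate $-\infty$ cases you flag are indeed the only bookkeeping required (including the corner case where $u$ itself is the infected node realizing $g$), and both arguments finish identically by iterating Lemma~\ref{lem:mainlemma} along the monotone path.
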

\begin{proof}
We assume the network has two Jordan infection centers: $w$ and $u,$ and assume
$\tilde{e}(w)=\tilde{e}(u)=\lambda$. The same argument works for the case where the network has only one
Jordan infection center.

Based on Lemma
\ref{lem:numofcenters}, $w$ and $u$ must be adjacent. We will show
for any $a\in{\cal V}\backslash\{w,u\},$ there exists a path from
$a$ to $u$ (or $w$) along which the infection
eccentricity strictly decreases.

{\bf Step 1:} First, it is easy to see from  Figure \ref{figure:OSPD} that $d(\gamma, w)\leq \lambda-1$  $\forall \gamma\in T_w^{-u}\cap {\cal
I}.$ We next show that there exists a node $\xi$ such that the
equality holds.

Suppose that $d(\gamma,w)\leq \lambda-2$ for any $\gamma \in T^{-u}_w\cap {\cal
I},$ which implies
$$d(\gamma,u)\leq \lambda-1\quad \forall \gamma \in T^{-u}_w\cap {\cal
I}.$$
Since $w$ and $u$ are both Jordan infection centers, we have $\forall \gamma \in T^{-w}_u\cap {\cal I},$
\begin{align*}
&d(\gamma,w)\leq \lambda\\
&d(\gamma,u)\leq \lambda-1.
\end{align*}
In a summary, $\forall \gamma \in {\cal I},$
$$d(\gamma,u)\leq \lambda-1.$$ This contradicts the fact that
$\tilde{e}(w)=\tilde{e}(u)=\lambda.$ Therefore, there exists $\xi
\in T^{-u}_w\cap {\cal I}$ such that
$$d(\xi,w)=\lambda-1.$$

\begin{figure}
\begin{centering}
  % Requires \usepackage{graphicx}
  \includegraphics[width=200 pt]{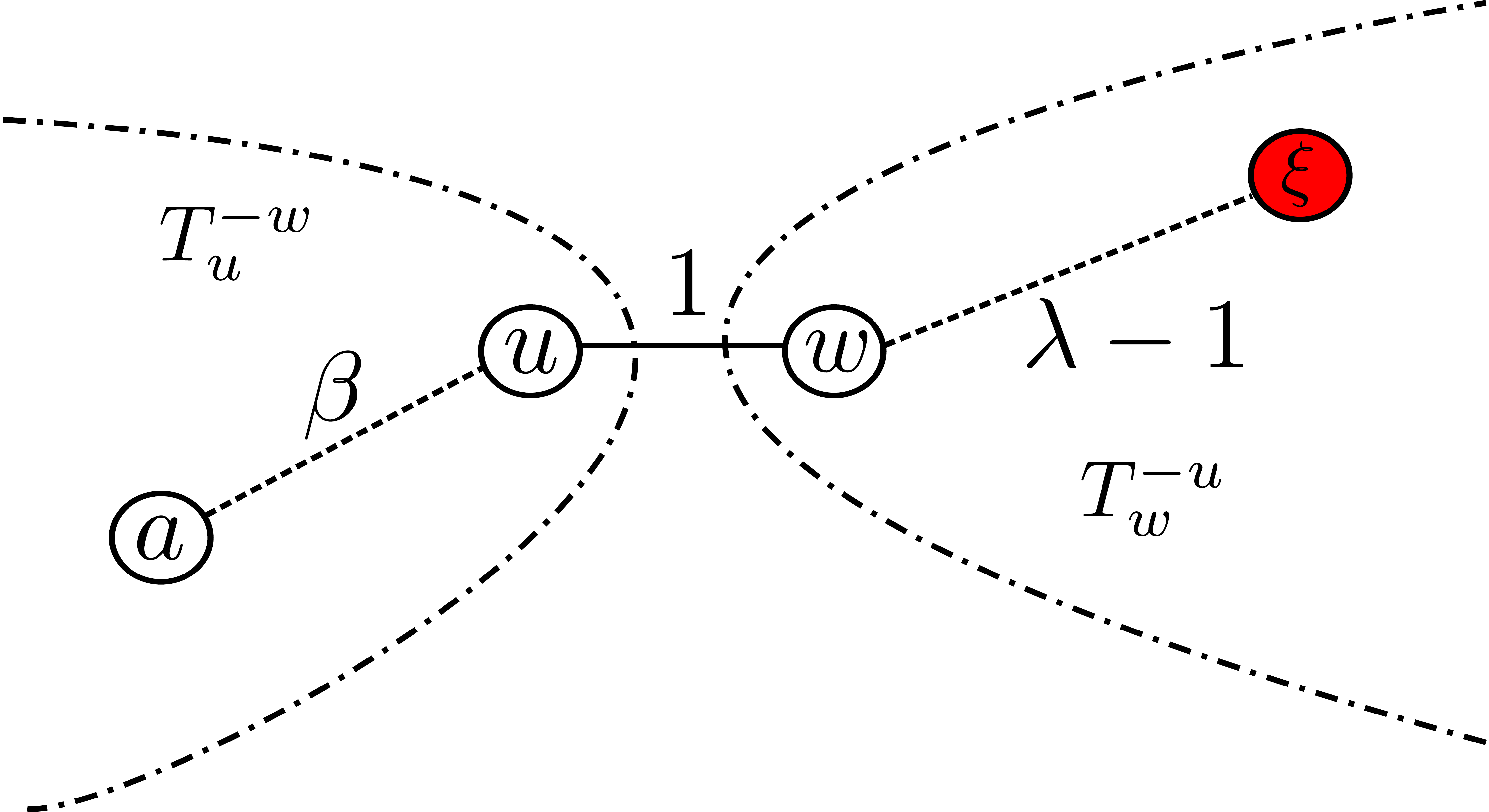}\\
  \caption{A Pictorial Description of the Positions of Nodes $a,$ $u,$ $w$ and $\xi.$}\label{figure:OSPD}
  \end{centering}
\end{figure}

{\bf Step 2:} Similarly, $\forall\gamma \in
T^{-w}_u\cap {\cal I},$ $$d(\gamma,u)\leq \lambda-1,$$ and
there exists a node such that the equality holds.

{\bf Step 3:} Next we consider $a\in{\cal V}\backslash\{w,u\},$ and
assume $a \in T^{-w}_u$ and $d(a,u)=\beta.$ Then for any $\gamma
\in T^{-u}_w\cap {\cal I},$ we have
\begin{align*}
d(a,\gamma)&=d(a,u)+d(u,w)+d(w,\gamma)\\
&\leq \beta+1+ \lambda-1\\
&=\lambda+\beta,
\end{align*}
and there exists $\xi\in T^{-u}_w\cap {\cal I}$ such that the equality holds. On the other
hand, $\forall \gamma \in T^{-w}_u\cap {\cal I}.$
\begin{align*}
d(a,\gamma)&\leq d(a,u)+d(u,\gamma)\\
&\leq \beta+ \lambda-1.
\end{align*}
Therefore, we conclude that
$$\tilde{e}(a)=\lambda+\beta,$$
 so the infection eccentricity decreases along the path from $a$ to
$u.$

{\bf Step 4:} Repeatedly applying Lemma \ref{lem:mainlemma} along the path from node $a$ to $u,$ we can
conclude that the optimal sample path rooted at node $u$ is more
likely to occur than the optimal sample path rooted at node $a.$ Therefore, the root
node associated with the optimal sample path ${\bf X}^*[0, t^*]$
must be a Jordan infection center, and the theorem holds.
\end{proof}

\section{Reverse Infection Algorithm}

Since in tree networks with infinitely many levels, the estimator based on the sample path
approach is a Jordan infection center, we view the Jordan
infection centers as possible candidates of the information source. We next present a simple algorithm to find the information source in general networks. The algorithm is to first identify the
Jordan infection centers, and then break ties based on the sum of distances to infected nodes.

The key idea of the algorithm is to let every infected node broadcast a
message containing its identity (ID) to its neighbors. Each node, after
receiving messages from its neighbors, checks whether the ID in the
message has been received. If not, the node records the ID (say
$v$), the time at which the message is received (say $t_v$), and then broadcasts
the ID to its neighbors. When a node receives the IDs of all
infected nodes, it claims itself as the information source and the
algorithm terminates. If there are multiple nodes receiving all
IDs at the same time, the tie is broken by selecting the
node with the smallest $\sum t_v.$

The tie-breaking rule we proposed is to choose the node with the
maximum infection closeness \cite{KosLehPee_05}. The
closeness measures the efficiency of a node to spread
information to all other nodes. The
closeness of a node is the inverse of the sum of
distances from the node to any other nodes. In our model, we define
the {\em infection closeness} as the inverse of the sum of distances from a node to all infected nodes,
which reflects the efficiency to spread information to infected
nodes. We select a Jordan infection center with the largest infection closeness, breaking ties at random.

\begin{algorithm}
\caption{Reverse Infection Algorithm}
\begin{algorithmic}
\FOR {$i\in{\cal I}$} \STATE $i$ sends its ID $\omega_i$ to its
neighbors. \ENDFOR

\WHILE {$t\geq 1$ and STOP$==0$}

\FOR{$u\in {\cal V}$}

\IF{$u$ receives $\omega_i$ for the first time}
 \STATE Set $t_{ui}=t$ and then broadcast the message $\omega_i$ to its
neighbors.
 \STATE If there exists a node who received $|{\cal I}|$
distinct messages, then set $STOP==1.$  \ENDIF
 \ENDFOR
\ENDWHILE

\RETURN $u^{\dag}=\arg\min_{u \in {\cal S}}\sum_{i \in {\cal
I}}t_{ui}$, where $\cal S$ is the set of nodes who receive $|\cal I|$
distinct messages when the algorithm terminates. Ties are broken at random.
\end{algorithmic}
\end{algorithm}
It is easy to verify that the set $\cal S$ is the set of the Jordan infection
centers. The running time of the algorithm is equal to the minimum
infection eccentricity and the number of messages each node
receives/sends during each time slot is bounded by its degree.

\section{Performance Analysis}\label{sec:performance}
The reverse infection algorithm is based on the structure properties
of the optimal sample paths on trees. While the MLE is the node that maximizes the likelihood of the snapshot among all possible nodes, the sample path based estimator does
not have such a guarantee. To demonstrate the effectiveness of
the sample path based approach, we next show that on $(g+1)$-regular
trees where each node has $g+1$ neighbors, the information source
generated by the reverse infection algorithm is within a constant
distance from the actual source with a high probability, independent of the number of infected nodes and the time at which
the snapshot ${\bf Y}$ was taken.

\begin{thm}\label{th:perfromance guarantee}
Consider a $(g+1)$-regular tree with infinitely many levels where $g>2$ and $gq>1.$ Assume that the observed
infection topology ${\bf Y}$ contains at least one infected node. Given $\epsilon>0$, there exists $d_{\epsilon}$ such that the distance between the optimal
sample path estimator and the actual source is $d_\epsilon$ with
probability $1-\epsilon,$ where $d_{\epsilon}$ is independent of the
number of infected nodes and the time the snapshot ${\bf Y}$ was taken.
\end{thm}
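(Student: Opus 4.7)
The plan is to translate the distance $d(v^\dag,v^*)$, where $v^\dag$ is a Jordan infection center by Theorem~\ref{thm:OSPD source}, into the length of a ``heaviest-chain'' walk from $v^*$ into the infected subtree, and then bound that length using the supercriticality of the branching infection process. Root the tree at $v^*$. The monotone-descent argument inside the proof of Theorem~\ref{thm:OSPD source} yields a path $v^*=v_0,v_1,\ldots,v_d=v^\dag$ along which the infection eccentricity strictly decreases; at each step, $v_{k+1}$ is the neighbor of $v_k$ in the direction of a farthest infected node, and the walk stops at $v_k$ precisely when the farthest infected node from $v_k$ is at most one step farther than the farthest infected node in any other direction. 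Thus $d(v^\dag,v^*)$ equals the length of this greedy walk.

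Next I would control the walk length via a branching-process argument. In the $(g+1)$-regular tree, the $g+1$ subtrees hanging off $v^*$ carry, conditional on the trajectory of $v^*$, independent and identically distributed copies of the infection process; similarly, at each $v_k$ visited by the walk, the $g$ subtrees hanging off $v_k$ away from $v_{k-1}$ carry conditionally i.i.d.\ copies of the infection process on a $g$-ary subtree. Each such process is supercritical since $gq>1$, so it survives with some fixed probability $\rho=\rho(g,q,p)>0$; on survival, its maximum reach at time $t$ concentrates around a linear function $\gamma t$ with $O(1)$ fluctuations admitting exponentially decaying tails (standard branching random walk estimates). Consequently, at each $v_k$ on the walk, with probability at least some $\delta=\delta(g,q,p)>0$ uniform in $t$, at least two of its $g$ away-subtrees have surviving infection whose maximum depths differ by at most $1$---the event that forces the walk to stop at $v_k$. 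Hence the walk continues from $v_k$ with probability at most $1-\delta$, independently of $t$.

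A geometric tail bound then yields $\Pr(d(v^\dag,v^*)\geq k)\leq(1-\delta)^k$, and choosing $d_\epsilon=\lceil\log(1/\epsilon)/\log(1/(1-\delta))\rceil$ gives the desired constant, independent of $t$ and $|{\cal I}|$. The hard part will be the decoupling required to justify the i.i.d.\ structure at $v_k$: conditioning on the walk having reached $v_k$ depends on the history through the sibling subtrees of $v_0,\ldots,v_{k-1}$, so I would invoke a branching-process strong Markov property to argue that the subtrees hanging off $v_k$ are still conditionally distributed as fresh i.i.d.\ copies. Other technical points include securing uniform-in-$t$ two-sided concentration for the branching-process maximum (so that $\delta$ does not deteriorate with $t$) and absorbing the extinction-of-all-branches event---on which $|{\cal I}|$ is almost surely finite with exponentially decaying tails, so $d(v^\dag,v^*)$ is automatically small---into the $\epsilon$ budget.
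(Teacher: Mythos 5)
Your approach is genuinely different from the paper's, and it contains a gap that I don't think your proposed fix can close. The crux is the decoupling needed for the geometric bound $\Pr(d(v^\dag,v^*)\geq k)\leq(1-\delta)^k$. The greedy walk chooses $v_{k+1}$ by comparing the maximum infected depths of the subtrees hanging off $v_k$, i.e.\ by looking at a functional of the \emph{entire future} of the process in each subtree. Consequently, ``the walk reaches $v_k$'' is not an event measurable with respect to any filtration that leaves the configuration below $v_k$ fresh: conditioning on it forces the subtree rooted at $v_k$ to contain the globally farthest infected node and to have beaten every competitor subtree at each of $v_0,\dots,v_{k-1}$ by at least $2$ in depth. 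This conditioning size-biases the forward away-subtree of $v_k$ (and, since the depth of $T_{v_{k+1}}^{-v_k}$ is itself the maximum over the $g$ away-subtrees of $v_{k+1}$, it propagates into them as well), so the away-subtrees of $v_k$ are not conditionally i.i.d.\ copies of the unconditioned process, and no strong Markov property applies because the ``steps'' of the walk are not stopping times. Separately, your per-step estimate needs uniform-in-$t$ tightness of the depth of the deepest \emph{currently infected} node of a surviving SIR process (not the deepest ever-infected node); this is a Bramson-type tightness statement, it is not off-the-shelf for this model, and even granting tightness you need an anti-concentration step to get $\Pr(|M_A-M_B|\leq 1)\geq\delta$ rather than merely $\Pr(|M_A-M_B|\leq 2C)\geq 1-\eta$.

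The paper takes a route that sidesteps both difficulties. Rooting at $v^*$, it tracks the sub-process $B^1$ of nodes infected exactly one slot after their parents; a surviving $B^1$ lineage rooted at $u_1\in{\cal Z}_l^\tau$ deterministically produces an infected node at depth $l+t-t^I_{u_1}$ at observation time, so no fluctuation analysis of the front is needed. It then shows that, except with probability $\epsilon$, either the infection dies out by level $L$ (in which case $\tilde e(v^*)\leq L-1$ and $d(v^*,v^\dag)\leq 2(L-1)$), or two such $B^1$ processes rooted at level $l\leq L$ survive; in the latter case any candidate $\tilde v$ at level $\geq(\tau+1)L-1$ is at distance $>t$ from the deep infected node of the surviving process whose subtree does not contain $\tilde v$, whereas every Jordan infection center has infection eccentricity $\leq\tilde e(v^*)\leq t$ --- so $\tilde v$ is excluded. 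The probability estimate reduces to showing that $Z_l^\tau$ exceeds a threshold $n_0$ by some level $L$ unless it dies out (Lemmas \ref{lem:morethann} and \ref{lem:exinctcase}) and that $n_0$ independent supercritical Galton--Watson processes yield at least two survivors with high probability (Lemma \ref{lem:2survive}); independence here is genuine because these processes sit in disjoint subtrees and no trajectory-dependent conditioning is imposed. If you want to salvage your walk-based argument, you would need to replace the geometric recursion by a one-shot argument of this kind; as written, the induction step is the part that fails.
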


\begin{proof} Consider the tree rooted at the information source $v^*.$ We say $v^*$ is at level $0.$
We denote by ${\cal Z}_l$ the set of infected and recovered nodes at
level $l.$ Furthermore, we define ${\cal Z}_l^\tau$ to be the set of infected and recovered
nodes at level $l$ whose parents are in set ${\cal Z}_{l-1}^{\tau}$ and who were infected within $\tau$ time slots
after their parents were infected. We assume ${\cal Z}_0^{\tau}=\{v^*\}.$  In addition, let $Z_l=|{\cal
Z}_l|$ and $Z_l^\tau=|{\cal Z}_l^\tau|.$

 Note
$$\lim_{\tau\rightarrow \infty}Z_l^\tau=Z_l,$$
and given $v$ and $u\in {\cal Z}_l^\tau$,
$$|t^I_v-t^I_u|\leq l(\tau-1),$$ i.e., the infection
times of nodes in ${\cal Z}_l^\tau$ differ by at most $l(\tau-1)$ (note that the
difference is not $\tau-1$ since the parents of $u$ and $v$ may be
infected at different times). Our proof is based on the Galton
Watson (GW) branching process \cite{HacJagVat_05}. A GW branching
process is a stochastic process $B(l)$ which evolves according to
the recurrence formula $B(0)=1$ and
$$B(l)=\sum_{i=1}^{B(l-1)} \zeta_i,$$
where $\{\zeta_i\}$ is a set of random variables, taking
values from nonnegative integers. The distribution of $\zeta_i$ is
called the offspring distribution of the branching process. In a
$(g+1)$-regular tree, the evolution of ${\cal Z}_l^{\tau}$ is a
branching process, where the offspring distribution is a function of
$\tau.$ We use $B^{\tau}$ to denote the corresponding branching
process, and $B^{\tau}(l)$ to denote the number of offsprings at level $l,$
i.e., $B^{\tau}(l)=Z^{\tau}_l$ (we use these two notations
interchangeably). Given a node is in the infected state for $t$ time slots,
the number of infected offsprings follows a binomial
distribution. Note the following two facts:
\begin{itemize}
  \item The number of time slots at which a node is in the infected state follows a geometric distribution
  with parameter $p$.
  \item A child remains to be susceptible with probability $(1-q)^\tau$ when the parent has been in the infected state for $\tau$ time slot.
\end{itemize}
 Therefore, the offspring distribution of the branching process $B^{\tau}$ at level $\geq 1$\footnote{The source node has $g+1$ children while other nodes have $g$ children} is
\begin{align*}
&\Pr(\gamma=i)\\
&=\sum_{t=1}^{\tau-1}(1-p)^{t-1}p\binom{g}{i}\left(1-(1-q)^t\right)^{i}(1-q)^{t(g-i)}\\
&+\left(1-\sum_{t=1}^{\tau-1}(1-p)^{t-1}p\right)\binom{g}{i}\left(1-(1-q)^{\tau}\right)^{i}(1-q)^{\tau(g-i)},
\end{align*} where $\gamma$ is the number of offsprings of a node.
The offspring distribution of branching process $B^{\infty}$ is
\begin{align*}
&\Pr(\gamma'=i)\\
&=\sum_{t=1}^{\infty}(1-p)^{t-1}p\binom{g}{i}\left(1-(1-q)^t\right)^{i}(1-q)^{t(g-i)}.
\end{align*}

Each infected node can be viewed the source of branching processes on the subtree rooted at the node. We define $K_l$ to be the number of survived $B^1$ branching processes whose roots are in set ${\cal Z}_l^\tau,$ where a branching process survives if it never dies out.

Now given $L\geq 2,$ we consider the following events:
\begin{itemize}
  \item Event 1: $Z_L=0$
  \item Event 2:  $K_{l}\geq 2$ for some $l\leq L.$ In other words, at least two $B^1$ branching processes starting from ${\cal Z}_l^\tau$ survive for some $l\leq L.$
\end{itemize}
We note that these two are disjoint events.

When $Z_l=0,$ no node at level $L$ is infected and the infection
process terminates at level $L-1.$ When there is at least one infected node in $\bf Y,$ since $\tilde{e}(v^*)\leq L-1,$ the minimum infection eccentricity is at most $L-1.$ Therefore, the distance between
$v^*$ and $v^{\dag}$ is no more than $2(L-1).$

\begin{figure}[!t]
\begin{centering}
  \includegraphics[width=0.9\columnwidth]{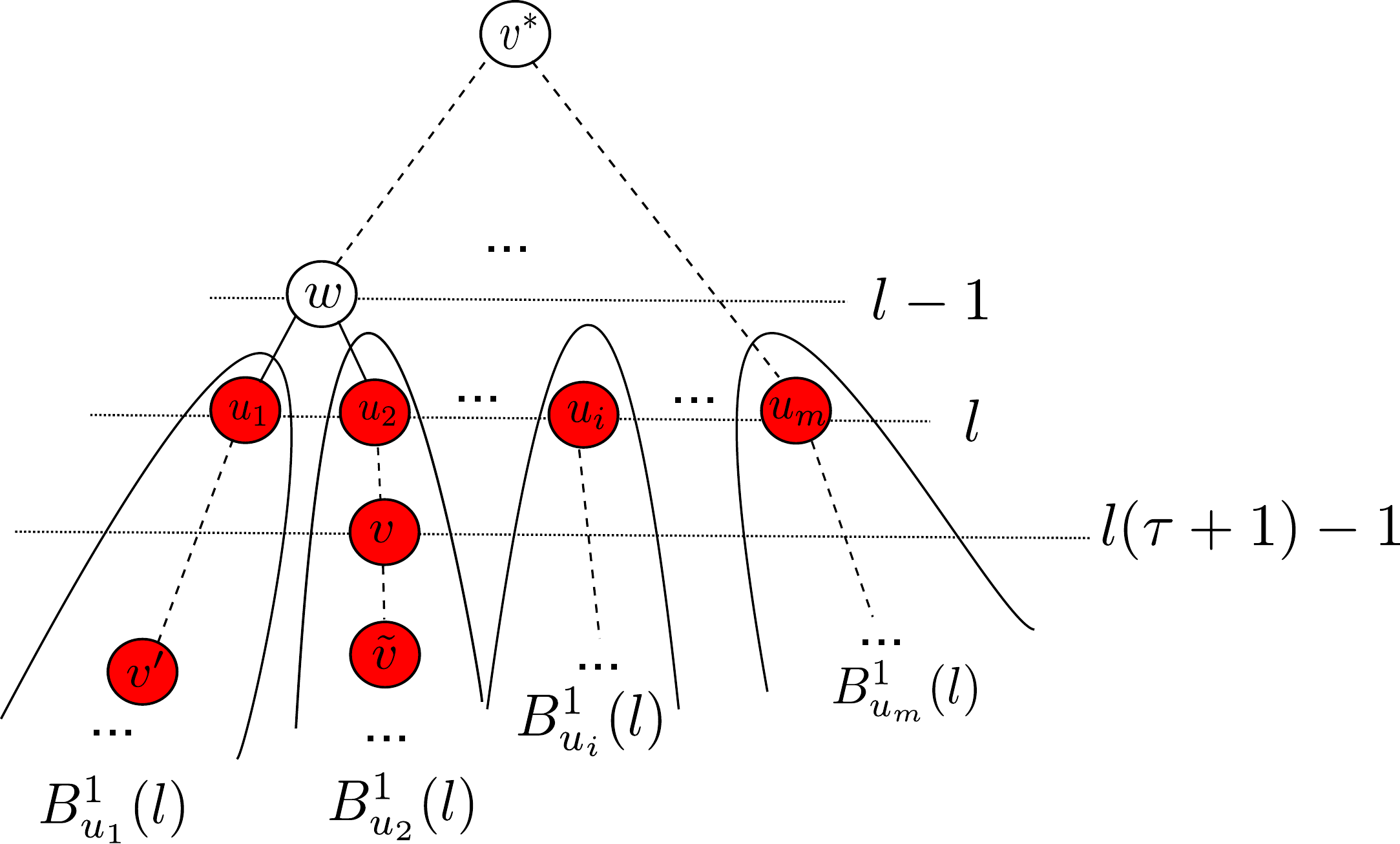}\\
  \caption{A pictorial description of the positions of $v^{\prime},$ $\tilde{v},$ $u_1,$ and $w.$}\label{figure:pgg}
\end{centering}
\end{figure}

Given $K_l\geq 2$ for some $l\leq L,$ we will argue that the distance between the sample path based estimator and the actual one is upper bounded by $(\tau+1) L-1.$ Consider Figure \ref{figure:pgg}, where the shaded nodes are
infected and recovered nodes. We will show that if two $B^1$
branching processes starting from $l\leq L$ survive, a node at level $\geq(\tau+1) L-1$ cannot be
a Jordan infection center. Recall that at time $t,$ the distance
between any infected node and the actual source is no more than $t,$
which implies the eccentricity of a Jordan infection center is $\leq t.$ Now
consider a node $\tilde{v}$ at level $\geq (\tau+1) l-1.$ Recall that at least two $B^1$
branching processes starting from level $l$ survive. Let $u_1\in {\cal Z}_l^\tau$ be the root of a survived $B^1$ branching process, and assume node $\tilde{v}$ is not on the subtree rooted at $u_1.$ Further, assume $v^{\prime}$ is
an infected node at the lowest level on sub-tree $T_{u_1}^{-w}.$
Since the branching process $B^1_{u_1}$ survives, the
infection process propagates one level lower at each time slot and node
$v^{\prime}$ is at level $l+t-t_{u_1}^I.$

From Figure \ref{figure:pgg}, it is easy to see that the distance between $v^{\prime}$ and $\tilde{v}$ is at least
$$t-t^I_{u_1}+2+(\tau+1) l-1-l=t-t^I_{u_1}+\tau l+1,$$ which occurs
when the first common predecessor of nodes $v^{\prime}$ and
$\tilde{v}$ is at $l-1$ level. Note that the common predecessor
cannot appear at level $\geq l$ since $\tilde{v}$ is not on
$T_{u_1}^{-w}.$ Since $u_1\in {\cal Z}^{\tau}_{l},$ the infection time of
node $u_1$ is no later than $\tau l,$ i.e., $t^I_{u_1}\leq \tau l.$ Therefore, the distance
between $v^\prime$ and $\tilde{v}$ is at least $t+1,$ which is
larger than $t.$ Hence, $v^{\prime}$ cannot be a Jordan infection center. Since $l\leq L,$ any node at or below level $(\tau+1) L-1$ cannot be a Jordan infection center. In a summary, if event 2 occurs, then we have $$d(v^*,v^{\dag})\leq (\tau+1) L-1.$$

We next show that given any $\epsilon,$ we can find sufficiently
large $\tau$ and $L,$ independent of $t$ and the number of infected nodes, such that the probability that either event 1 or event 2 occurs is at least
$1-\epsilon.$

Given $n_0>0$ and $\tau>0,$ we define $$l^\dag=\min \left\{l: Z^\tau_{l}>n_0\right\},$$ i.e., $l^\dag$ is the first level at which $B^{\tau}$ has more than $n_0$ nodes. We first have
\begin{align*}
&\Pr(Z_L=0)+\Pr\left(K_{l}\geq 2\hbox{ for some }l\leq L\right)\\
\geq &\Pr(Z_L=0)+\Pr\left(K_{l^\dag}\geq 2\hbox{ and }l^\dag\leq L\right)\\
=&\Pr(Z_L=0)+\Pr\left(l^\dag\leq L\right) \Pr\left(K_{l^\dag}\geq 2\Big|l^\dag\leq L\right)\\
=&\Pr(Z_L=0)+\Pr\left(\bigcup_{i=1}^{L}
\left\{Z^{\tau}_i>n_0\right\}\right) \Pr\left(K_{l^\dag}\geq 2\Big|l^\dag\leq L\right)\\
\geq&\left(1-\Pr\left(\bigcap_{i=1}^{L}\left\{0<Z^{\tau}_i\leq n_0\right\}\right)-\Pr\left(\bigcup_{i=1}^{L}\left\{Z^{\tau}_i=0\right\}\right)\right)\\
&\times \Pr\left(K_{l^\dag}\geq 2\Big|l^\dag\leq L\right)+\Pr(Z_L=0).
\end{align*}

Note that we have
\begin{align}
&\Pr(K_{l^\dag}\geq 2|l^\dag\leq L)\nonumber\\
=&\sum_{l=1}^L \Pr(K_{l^\dag}\geq 2|  l^\dagger=l) \Pr( l^\dagger=l|l^\dag\leq L). \label{eqn:KL}
\end{align}
According to Lemma \ref{lem:2survive}, given any $\epsilon_1>0,$ we can find a sufficiently large $n_0$ such that
\[
\Pr(K_{l^\dag}\geq 2|l^\dagger=l)\geq (1-\epsilon_1),
\]
which implies that for sufficiently large $n_0,$
\[
\Pr(K_{l^\dag}\geq 2|l^\dag\leq L)\geq 1-\epsilon_1.
\]
We can then conclude
\begin{align*}
&\Pr(Z_L=0)+\Pr\left(K_{l}\geq 2\hbox{ for some }l\leq L\right)\\
\geq& \left(1-\Pr\left(\bigcap_{i=1}^{L}\left\{0<Z^{\tau}_i\leq
n_0\right\}\right)\right)(1-\epsilon_1)\\
&-\Pr\left(\bigcup_{i=1}^{L}\left\{Z^{\tau}_i=0\right\}\right)+\Pr(Z_L=0)\\
=&\left(1-\Pr\left(\bigcap_{i=1}^{L}\left\{0<Z^{\tau}_i\leq
n_0\right\}\right)\right)(1-\epsilon_1)\\
&+\Pr(Z_L=0)-\Pr(Z_L^\tau=0),
\end{align*}
where $\Pr(\cup_{i=1}^L \left\{Z_i^\tau=0\right\})=\Pr(Z_L^\tau=0)$ because
$Z_l^{\tau}=0$ implies that $Z_L^{\tau}=0$ for $l\leq L.$

According to Lemma \ref{lem:morethann} and Lemma
\ref{lem:exinctcase}, given any $\epsilon_2>0$ and $\epsilon_3>0,$
there exist sufficiently large $\tau$ and $L$ such that
$$\left(1-\Pr\left(\bigcap_{i=1}^{L}\left\{0<Z^{\tau}_i\leq
n_0\right\}\right)\right)>1-\epsilon_2,$$ and
$$\Pr(Z_L=0)-\Pr(Z^\tau_L=0)\geq -\epsilon_3.$$
Hence, we have
\begin{eqnarray*}
&&\Pr(Z_L=0)+\Pr\left(K_{l}\geq 2\hbox{ for some }l\leq L\right)\\
&\geq&(1-\epsilon_1)(1-\epsilon_2)-\epsilon_3.
\end{eqnarray*} Now choosing $\epsilon_1=\epsilon_2=\epsilon_3=\epsilon_4/3$ for some $\epsilon_4>0,$ we have
\begin{eqnarray*}
&&\Pr(Z_L=0)+\Pr\left(K_{l}\geq 2\hbox{ for some }l\leq L\right)\\
&\geq&1-\epsilon_4.
\end{eqnarray*}

Now let $|{\bf Y}|$ denote the number of infected nodes in the observation $\bf Y.$ Define events $E_1=\{Z_L=0\}$ and $E_2=\{K_{l}\geq 2\hbox{ for some }l\leq L\}.$ We have
\begin{align*}
&\Pr(E_1||{\bf Y}|=1)+\Pr\left(E_2||{\bf Y}|=1\right)\\
=&\frac{1}{\Pr(|{\bf Y}|=1)}\left({\Pr(E_1\cap \{|{\bf Y}|=1\})}+{\Pr\left(E_2\cap \{|{\bf Y}|=1\}\right)}\right).
\end{align*}
Since $E_2$ implies that $|{\bf Y}|=1,$ we have
\begin{align*}
&\Pr(E_1||{\bf Y}|=1)+\Pr\left(E_2||{\bf Y}|=1\right)\\
=&\frac{1}{\Pr(|{\bf Y}|=1)}\left({\Pr(E_1\cap \{|{\bf Y}|=1\})}+{\Pr\left(E_2\right)}\right)\\
=&\frac{1}{\Pr(|{\bf Y}|=1)}\left(\Pr(E_1)-{\Pr(E_1\cap \{|{\bf Y}|=0\})}+{\Pr\left(E_2\right)}\right)\\
\geq &\frac{1}{\Pr(|{\bf Y}|=1)}\left(\Pr(E_1)-{\Pr(\{|{\bf Y}|=0\})}+{\Pr\left(E_2\right)}\right)\\
\geq &\frac{1}{\Pr(|{\bf Y}|=1)}\left({\Pr(\{|{\bf Y}|=1\})}-\epsilon_4\right)\\
= &1-\frac{\epsilon_4}{\Pr(|{\bf Y}|=1)}.
\end{align*}
Note that $\Pr(|{\bf Y}|=1)$ is a positive constant since the $B^1$ branching process starting from the information source survives with non-zero probability. The theorem holds by choosing $\epsilon_4=\epsilon \Pr(|{\bf Y}|=1).$
\end{proof}

\begin{lemma}\label{lem:2survive}
Consider $n_0$ i.i.d GW branching processes with a binomial offspring
distribution with parameters $g$ and $q$ such that $gq>1.$ Denote by $K$ the number of branching processes that
survive.  Given any
$\epsilon>0,$ if
\[
 n_0 \geq
\frac{8\log\frac{1}{\epsilon}}{1-\rho},
\]
 then
\[
\Pr(K\geq 2)\geq 1-\epsilon,
\] where $\rho$ is the extinction probability of the GW branching process. In the binomial case, $\rho$ is the smallest non-negative root of equation $\rho=(1-q+q\rho)^g.$ \hfill{$\square$}
\end{lemma}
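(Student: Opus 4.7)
The plan is a straightforward reduction to a Chernoff bound on a binomial random variable. By standard Galton--Watson theory, each of the $n_0$ independent branching processes has survival probability $\pi := 1-\rho$, where $\rho$ is the smallest non-negative fixed point of the offspring PGF $f(s) = (1-q+qs)^g$ as stated. Since the mean offspring $gq > 1$, the process is supercritical and $\rho < 1$, so $\pi \in (0,1]$. The number of surviving processes is therefore $K \sim \mathrm{Binomial}(n_0, \pi)$ with mean $\mu := n_0(1-\rho)$, and the lemma reduces to a routine lower-tail estimate on $K$.

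I would then invoke the standard multiplicative Chernoff lower-tail inequality: for any $\delta \in (0,1)$,
\[
\Pr\!\left(K \leq (1-\delta)\mu\right) \;\leq\; \exp\!\left(-\frac{\delta^2 \mu}{2}\right).
\]
Taking $\delta = 1/2$ gives $\Pr(K \leq \mu/2) \leq \exp(-\mu/8)$. As long as $\mu \geq 2$, we may replace the left-hand side by the (larger) event $\{K \leq 1\}$ and conclude
\[
\Pr(K \leq 1) \;\leq\; \exp\!\left(-\frac{n_0(1-\rho)}{8}\right).
\]

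The last step is pure algebra: the hypothesis $n_0 \geq 8\log(1/\epsilon)/(1-\rho)$ rearranges to $n_0(1-\rho)/8 \geq \log(1/\epsilon)$, so the exponential on the right is at most $\epsilon$, which gives $\Pr(K \geq 2) = 1-\Pr(K\leq 1) \geq 1-\epsilon$. There is essentially no hard step here; the only bookkeeping point is ensuring the Chernoff regularity $\mu \geq 2$, which follows from the hypothesis whenever $\epsilon \leq e^{-1/4}$. For larger $\epsilon$ the conclusion $\Pr(K \geq 2) \geq 1-\epsilon$ is a weak enough claim that it can either be dispatched by a direct calculation of $\Pr(K=0)+\Pr(K=1) = \rho^{n_0}(1 + n_0 \pi/\rho)$, or simply absorbed into the choice of constants in the outer argument in the proof of Theorem~\ref{th:perfromance guarantee}, where $\epsilon$ is eventually taken small.
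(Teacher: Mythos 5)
Your proposal is essentially identical to the paper's proof: both treat $K$ as a sum of $n_0$ i.i.d.\ survival indicators with mean $\mu=n_0(1-\rho)$, apply the multiplicative Chernoff lower-tail bound with $\delta=1/2$, and read off the conclusion from $n_0(1-\rho)/8\geq\log(1/\epsilon)$. You are in fact slightly more careful than the paper on one point: the paper also quietly requires $(1-\rho)n_0/2\geq 2$ (so that $\{K\leq 1\}$ is contained in the Chernoff tail event) without reconciling it with the stated hypothesis, whereas you note explicitly that this holds once $\epsilon\leq e^{-1/4}$ and that larger $\epsilon$ is a trivial regime.
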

\begin{proof}
The extinction probability of a GW branching process is denoted by
$\rho,$ which is the smallest none negative root of equation
$\rho=G(\rho)$ according to \cite{HacJagVat_05}, where $G(\rho)$ is the
moment generating function of offspring distribution. In the
binomial case we have $G(\rho)=(1-q+q\rho)^g.$ $\rho<1$ when $gq>1.$

We define a Bernoulli random variable $H_i,$ for the $i^{\rm th}$
branching process such that
\[
H_i=\left\{
      \begin{array}{ll}
        1, & \hbox{if the $i$th branching process survives;} \\
        0, & \hbox{otherwise.}
      \end{array}
    \right.
\]
So $K=\sum_{i=1}^{n_0}H_i,$ and $$E[K]=n_0(1-\rho).$$

According to the Chernoff bound \cite{MitUpf_05}, we have
\begin{align*}
\Pr(K\leq(1-\delta)(1-\rho)n_0)<
    e^{-\frac{(1-\rho)n_0\delta^2}{2}}.
\end{align*}
Choose $\delta=0.5.$ The Lemma holds if $$(1-\rho)n_0/2\geq 2,$$ and
$$(1-\rho)n_0/8\geq \log 1/\epsilon.$$
\end{proof}

\begin{lemma}\label{lem:morethann}
Given any $\epsilon>0,$ there exists a constant $L^{\prime}$ such
that for any $L\geq L^{\prime},$
\[
\Pr\left(\bigcap_{i=1}^{L}\{0<Z^{\tau}_i\leq n_0\}\right)\leq
\epsilon.
\]\hfill{$\square$}
\end{lemma}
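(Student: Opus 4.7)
The plan is to exploit the fact that $B^\tau$ is a supercritical Galton--Watson branching process whose one-step transitions from any small positive state have a uniformly positive chance of extinction. Since $gq>1$ and the mean offspring of $B^\tau$ is non-decreasing in $\tau$ and already equals $gq>1$ at $\tau=1$, the process is supercritical for every $\tau\geq 1$. Intuitively, once the process is constrained to stay in the bounded window $(0,n_0]$ for many consecutive levels, with high probability it must have already died out, because from any state with a bounded number of individuals extinction in one step happens with probability bounded below by a positive constant.

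The first step is to pin down a uniform one-step extinction probability. Let
\[
p_0 \;=\; \Pr(\gamma=0) \;=\; \sum_{t=1}^{\tau-1}(1-p)^{t-1}p\,(1-q)^{tg} + \Bigl(1-\sum_{t=1}^{\tau-1}(1-p)^{t-1}p\Bigr)(1-q)^{\tau g},
\]
which is strictly positive because $p\in(0,1)$ and $1-q\in(0,1)$. For any state $k$ with $1\leq k\leq n_0$, the $k$ individuals at level $i$ produce offspring independently, so
\[
\Pr\!\left(Z^{\tau}_{i+1}=0 \,\big|\, Z^{\tau}_i=k\right) \;\geq\; p_0^{k} \;\geq\; p_0^{n_0} \;=:\; \delta \;>\; 0.
\]

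The second step is to chain these one-step bounds together using the Markov property of $\{Z^{\tau}_i\}$. Writing $A_i=\{0<Z^{\tau}_i\leq n_0\}$, the event $A_{i+1}$ forces $Z^{\tau}_{i+1}\neq 0$, hence
\[
\Pr(A_{i+1}\mid A_i) \;\leq\; 1-\Pr(Z^{\tau}_{i+1}=0\mid A_i) \;\leq\; 1-\delta.
\]
Iterating, and using the Markov property to replace the conditioning on $A_1\cap\cdots\cap A_{i-1}$ by conditioning on $A_{i-1}$ alone,
\[
\Pr\!\left(\bigcap_{i=1}^{L} A_i\right) \;\leq\; \prod_{i=2}^{L}\Pr(A_i\mid A_{i-1}) \;\leq\; (1-\delta)^{L-1}.
\]

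To finish, given $\epsilon>0$ it is enough to take $L'=1+\lceil \log\epsilon/\log(1-\delta)\rceil$, so that $(1-\delta)^{L-1}\leq \epsilon$ whenever $L\geq L'$. There is no substantial obstacle: the only point needing care is verifying that the per-node probability $p_0$ of having no infected children is strictly positive, which is immediate from the explicit offspring distribution; the rest is a textbook geometric tail estimate for a Markov chain that has a uniformly positive chance of dying out in each step while it remains in the bounded window.
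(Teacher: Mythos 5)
Your proof is correct and takes essentially the same route as the paper's: your $p_0=\Pr(\gamma=0)$ is exactly the paper's $1-p_\tau$, your uniform one-step extinction bound $\delta=p_0^{n_0}$ is the paper's $(1-p_\tau)^{n_0}$, and the chaining of conditional probabilities into a geometric bound $(1-\delta)^{L-1}$ mirrors the paper's $(1-(1-p_\tau)^{n_0})^{L}$ with the same choice of $L'$ up to an additive constant.
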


\begin{proof}
Define $p_\tau$ to be the probability that a node infects at least
one of its children if it is in the infection state for $\tau$ time
slots. We have
  \begin{align*}
  p_\tau&=\sum_{t=1}^{\tau-1} (1-p)^{t-1}p(1-(1-q)^{gt})\\
&+(1-p)^{\tau-1}(1-(1-q)^{g\tau}),
  \end{align*}
and
\begin{align*}
&\Pr(0<Z^{\tau}_l\leq n_0|0<Z^{\tau}_{l-1}\leq n_0)\\
\leq &\Pr(Z_l^{\tau}>0|0<Z_{l-1}^{\tau}\leq n_0)\\
\leq & 1-(1-p_\tau)^{n_0},
\end{align*}
which implies that
\begin{align*}
&\Pr\left(\bigcap_{i=1}^{L}0<Z^{\tau}_i\leq n_0\right)\\
=&\Pr(0<Z^{\tau}_{L}\leq n_0|0<Z^{\tau}_{L-1}\leq
n_0)\\
&\times\Pr(0<Z^{\tau}_{L-1}\leq
n_0|0<Z^{\tau}_{L-2}\leq n_0)\cdots\\
&\times\Pr(0<Z^{\tau}_{2}\leq n_0|0<Z^{\tau}_{1}\leq n_0)\Pr(0<Z^{\tau}_{1}\leq n_0)\\
 \leq&(1-(1-p_\tau)^{n_0})^{L}.
\end{align*} The lemma holds by choosing $$L^{\prime}=\left\lceil\frac{\log \epsilon}{\log\left(1-(1-p_\tau)^{n_0}\right)}\right\rceil.$$
\end{proof}

\begin{lemma}\label{lem:exinctcase}
Given any $\epsilon,$ there exist $\tau^{\prime}$ and $L^{\prime}$
such that for any $\tau>\tau^{\prime}$ and $L>L^{\prime}$
\[
\Pr(Z_{L}=0)-\Pr(Z^{\tau}_{L}=0)\geq-\epsilon.
\] \hfill{$\square$}
\end{lemma}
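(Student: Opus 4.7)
The plan is to relate both $\Pr(Z_L=0)$ and $\Pr(Z_L^\tau=0)$ to the extinction probabilities $\rho$ and $\rho^\tau$ of the branching processes $B^\infty$ and $B^\tau$, and then to show that $\rho^\tau\to\rho$ as $\tau\to\infty$. Because extinction is absorbing for a branching process, $\{Z_L=0\}$ is non-decreasing in $L$, so $\Pr(Z_L=0)\nearrow\rho$ and $\Pr(Z_L^\tau=0)\leq\rho^\tau$ for every $L$. This gives the chain
\[
\Pr(Z_L^\tau=0)-\Pr(Z_L=0)\ \leq\ \rho^\tau-\Pr(Z_L=0)\ =\ (\rho^\tau-\rho)+(\rho-\Pr(Z_L=0)),
\]
so that once $\tau$ is large enough to make $\rho^\tau-\rho\leq \epsilon/2$ and $L$ is large enough to make $\rho-\Pr(Z_L=0)\leq \epsilon/2$, the lemma follows.

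The main work is the convergence $\rho^\tau\to\rho$. Let $G_\tau$ and $G_\infty$ be the probability generating functions of the $B^\tau$ and $B^\infty$ offspring distributions; both are polynomials of degree at most $g$. The explicit formulas for $\Pr(\gamma=i)$ and $\Pr(\gamma'=i)$ given in the theorem's proof differ only by a tail correction whose coefficient is $(1-p)^{\tau-1}\to 0$, so $G_\tau\to G_\infty$ pointwise, and hence uniformly on $[0,1]$ since these polynomials have bounded degree. A coupling that truncates the infectious lifetime at $\tau$ shows that the $B^\tau$ offspring is stochastically dominated by the $B^\infty$ offspring, so $G_\tau(s)\geq G_\infty(s)$ on $[0,1]$ and therefore $\rho^\tau\geq\rho$, which also matches the sign asserted by the lemma.

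Since $gq>1$, a direct calculation using the mean of the geometric lifetime gives $G_\infty'(1)=gq/(p+q-pq)>1$, so $B^\infty$ is supercritical. Combined with $G_\infty(0)>0$ (a node may recover before infecting any child) and convexity of $G_\infty$ on $[0,1]$, this forces $G_\infty(s)=s$ to have exactly two roots on $[0,1]$, namely $\rho\in(0,1)$ and $1$, with $G_\infty(s)-s<0$ strictly on $(\rho,1)$. For any fixed $\delta>0$ the value $G_\infty(\rho+\delta)-(\rho+\delta)$ is therefore strictly negative, and uniform convergence pushes the same strict inequality onto $G_\tau$ for all sufficiently large $\tau$. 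Together with $G_\tau(0)>0$ and convexity of $G_\tau$, this pins the smallest fixed point of $G_\tau$ in $(0,1)$, which is $\rho^\tau$, to the interval $[\rho,\rho+\delta)$; letting $\delta\downarrow 0$ gives $\rho^\tau\to\rho$ and closes the plan. The main obstacle is this extinction-probability continuity step: it is not immediate from generating-function convergence alone and genuinely needs convexity plus supercriticality of $G_\infty$ to locate $\rho^\tau$ near $\rho$.
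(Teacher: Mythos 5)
Your proof is correct and its top-level structure is the same as the paper's: both split $\Pr(Z_L^{\tau}=0)-\Pr(Z_L=0)$ into the three pieces $\rho^{\tau}-\rho$, $\rho-\Pr(Z_L=0)$, and the nonnegative gap $\rho^{\tau}-\Pr(Z_L^{\tau}=0)$, then make the first two small by taking $\tau$ and $L$ large. The only genuine difference is in how the key step $\rho^{\tau}\to\rho$ is established. The paper sets $\rho'=\lim_{\tau}\rho_{\tau}$, passes to the limit in $G_{\tau}(\rho_{\tau})=\rho_{\tau}$ to get $G(\rho')=\rho'$, and then identifies $\rho'=\rho$ because $G(s)=s$ has at most two roots in $[0,1]$ and $\rho'\le\rho_1<1$; it does not justify that the limit exists (it does, by the same stochastic-domination monotonicity you invoke) nor that one may interchange the limit with $G_{\tau}$. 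Your argument instead uses uniform convergence of the bounded-degree polynomials $G_{\tau}\to G_{\infty}$ together with convexity, $G_{\infty}(0)>0$, and supercriticality (your computation $G_{\infty}'(1)=gq/(p+q-pq)>1$ under $gq>1$ is correct, since $p+q-pq\le 1$) to trap $\rho^{\tau}\in[\rho,\rho+\delta)$ for all large $\tau$; this avoids presupposing the limit exists and is the more airtight version of the same continuity fact. In short: same decomposition, with a cleaner and more self-contained execution of the extinction-probability continuity step.
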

\begin{proof}
Note the difference can be re-written as
\begin{align*}
&\Pr(Z_L=0)-\Pr(Z_L^{\tau}=0)\\
=&\left(\Pr(Z_L\!=\!0)-\Pr(Z_{\infty}\!=\!0)\right)\!+\!\left(\Pr(Z^{\tau}_{\infty}\!=\!0)\!-\!\Pr(Z^{\tau}_{L}\!=\!0)\right)\\
+&\left(\Pr(Z_{\infty}=0)-\Pr(Z^{\tau}_{\infty}=0)\right).
\end{align*}

{\bf Step 1} Since $\{Z^{\tau}_{L}=0\}\subseteq \{Z^{\tau}_{\infty}=0\}$,
$$\Pr(Z^{\tau}_{\infty}=0)-\Pr(Z^{\tau}_{L}=0)\geq 0.$$

{\bf Step 2} We
know
\[
\lim_{L \rightarrow \infty}\Pr(Z_{L}=0)=\Pr(Z_{\infty}=0).
\]
 Then for $\epsilon/2>0$, there exists $L^\prime$ such that for any $L\geq L^\prime,$
\[
|\Pr(Z_{L}=0)-\Pr(Z_{\infty}=0)|\leq \epsilon/2,
\]
which implies that
\[
\Pr(Z_{L}=0)-\Pr(Z_{\infty}=0)\geq -\epsilon/2.
\]

{\bf Step 3} In this step, we will show \[
\lim_{\tau \rightarrow \infty} \Pr(Z^{\tau}_{\infty}=0)=
\Pr(Z_{\infty}=0).
\]
Define the generating functions of the offspring
distributions of $B^\tau$ and $B^{\infty}$ to be $G_{\tau}(s)$ and
$G(s)$, respectively. We know that $G_{\tau}(s)-s$ and $G(s)-s$ are
convex functions when $s \in [0,1]$. Let $\rho=\Pr(Z_\infty=0),$
i.e., the extinction probability, we know that $\rho$ is the
smallest nonnegative root of $G(\rho)=\rho$ and $\rho<1$.

Similarly, define $\rho_\tau=\Pr(Z^\tau_\infty=0),$ and $\rho'=\lim_{\tau \rightarrow \infty} \rho_{\tau}$. Taking limit on
both sides of $G_{\tau}(\rho_{\tau})=\rho_{\tau}$, we have
\[
G(\rho')=\rho'.
\]
Note that $$\rho\leq\rho_{\tau}\leq \rho_1<1$$ for any $\tau,$
so $$\rho\leq\rho'\leq\rho_1<1.$$ Since $G(s)-s=0$ has at most
two solutions in $[0,1]$ and $s=1$ is one of them, we conclude
$\rho'=\rho$. Therefore, for given $\epsilon/2>0$, there exists
$\tau\geq\tau^\prime $ such that
\[
\Pr(Z_{\infty}=0)-\Pr(Z^{\tau}_{\infty}=0)\geq -\epsilon/2.
\]
Hence, the lemma holds.
\end{proof}
\section{Simulations}
In this section, we evaluate the performance of the reverse
infection algorithm on different networks, including different tree networks and some real world networks.
\subsection{Tree Networks}
In this section, we evaluate the performance of the reverse
infection algorithm on tree networks. We compare the reverse infection algorithm with the closeness centrality heuristic, which
selects the node with the maximum infection closeness as the
information source. Note that the node with the maximum closeness
is the maximum likelihood estimator of the information
source on regular trees under the SI model \cite{ShaZam_10,ShaZam_11,ShaZam_12}.

\subsubsection{Small-size tree networks}
We first studied the performance on small-size trees.  The infection probability $q$ was
chosen uniformly from $(0,1)$ and the recovery probability $p$ was
chosen uniformly from $(0,q).$ The infection process propagates $t$
time slots where $t$ was uniformly chosen from $[3,5].$ To keep the
size of infection topology small, we restricted the total number of
infected and recovered nodes to be no more than $100.$ For small-size trees, we first calculated the MLE using dynamic programming for fixed $t$ and then searching over $t\in [0, t_{\max}]$
for a large value of $t_{\max}$ to find the optimal estimator.
\begin{figure}[htb]
\begin{centering}
  % Requires \usepackage{graphicx}
  \includegraphics[width=0.90\columnwidth]{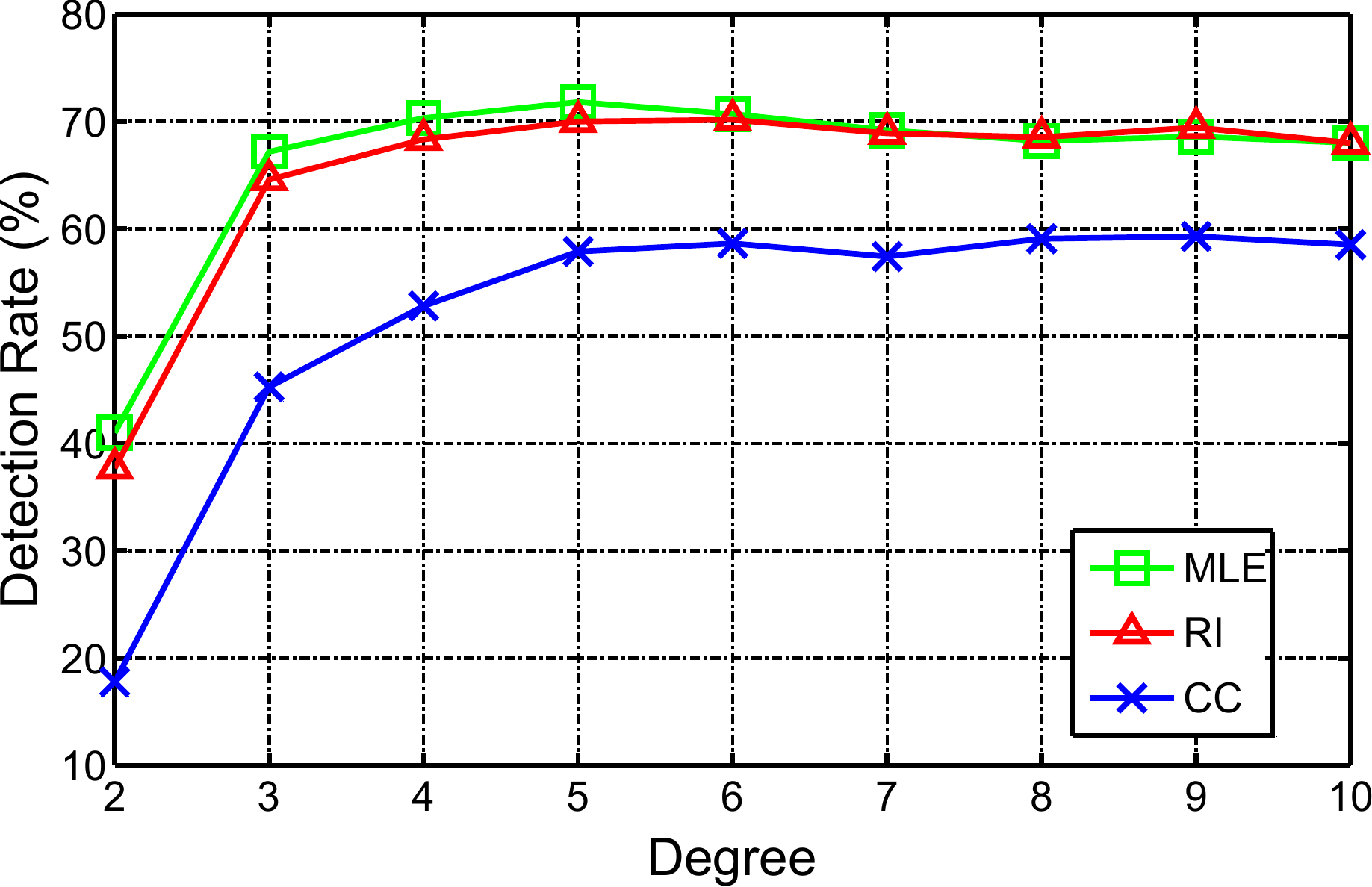}\\
  \caption{The Detection Rates of the Maximum Likelihood Estimator (MLE),
Reverse Infection (RI) and Closeness Centrality (CC) on Regular
Trees}\label{figure:smallregular}
  \end{centering}
\end{figure}

The detection rate is defined to be the fraction of experiments in
which the estimator coincides with the actual source. We varied $g$ from $2$ to
$10$ and the results are shown in Figure
\ref{figure:smallregular}. We can see that the detection rate of the
reverse infection algorithm is almost the same as that of the
MLE, and is higher than that of the closeness
centrality heuristic by approximately $20\%$ when the degree is
small and by $10\%$ when the degree is large.

\begin{figure}[htb]
\begin{centering}
  \includegraphics[width=0.90\columnwidth]{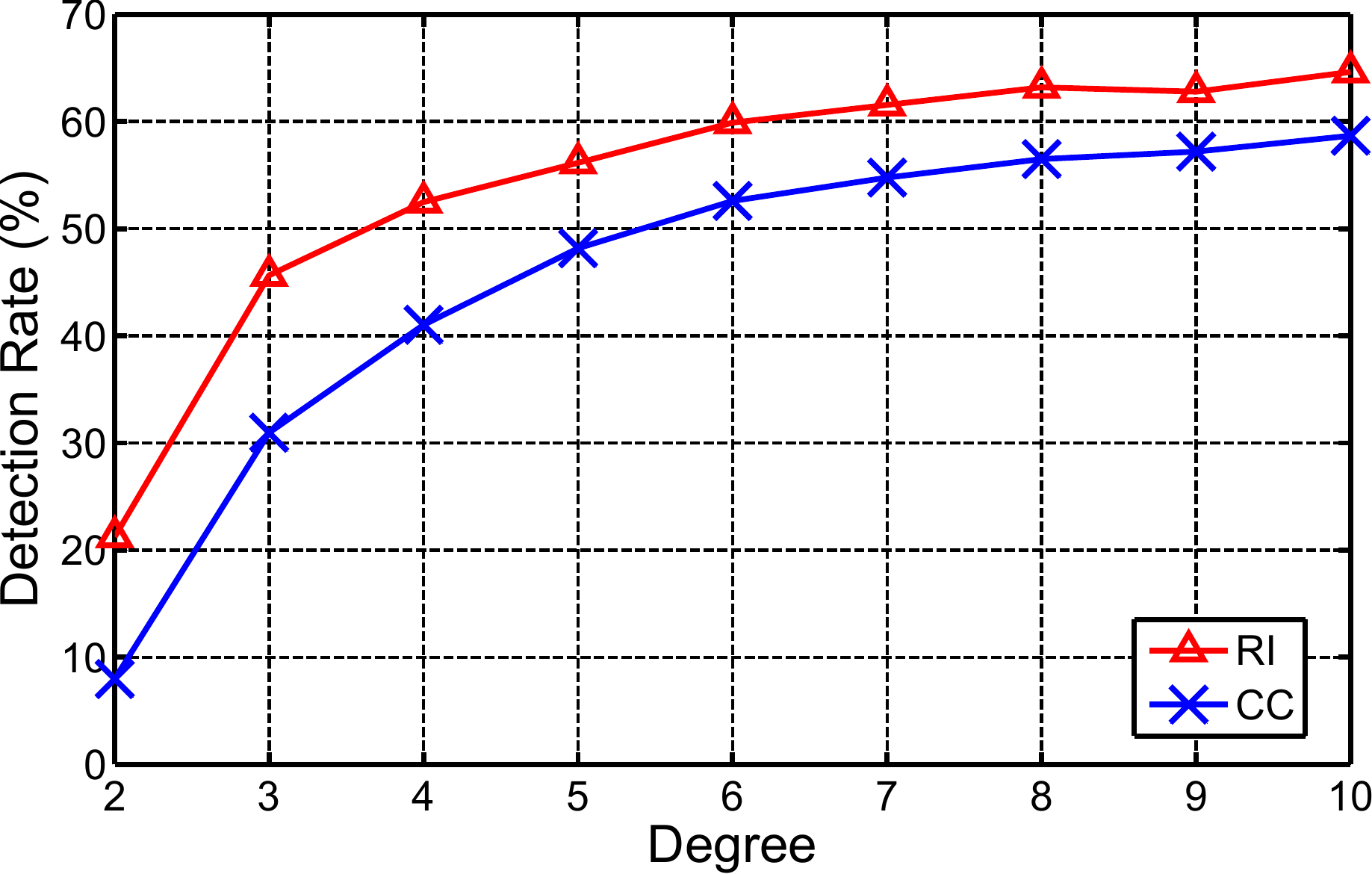}\\
  \caption{The Detection Rates of the
Reverse Infection (RI) and Closeness Centrality (CC) Algorithms on
Regular Trees}\label{figure:regular}
  \end{centering}
\end{figure}
\subsubsection{General $g$-regular tree networks}
We further conducted our simulations on large-size $g$-regular
trees. The infection probability $q$ was chosen uniformly from $(0,1)$ and
the recovery probability $p$ was chosen uniformly from $(0,q).$ The
infection process propagates $t$ time slots where $t$ was uniformly
chosen from $[3,20].$ We selected the networks in which the total
number of infected and recovered nodes is no more than $500.$

We varied $g$ from 2 to 10. Figure \ref{figure:regular} shows the
detection rate as a function of $g$. We can see the detection rates
of both the reverse infection and closeness centrality algorithms
increase as the degree increases and is higher than $60\%$ when
$g>6.$ However, he detection
rate of the reverse infection algorithm is higher than that of the
closeness centrality algorithm, and the average difference is
$8.86\%.$
\subsubsection{Binomial random trees}
In addition, we evaluated the performance on binomial random trees
where the number of children of each node follows a binomial
distribution with number of trials $g^\prime$ and success
probability $\beta.$ We fixed $g^\prime=10$ and varied $\beta$ from
0.1 to 0.9. The durations of the infection process and the observed
infected networks were selected according to the same rules for the
$g$-regular tree case.
\begin{figure}[!t]
\begin{centering}
  \includegraphics[width=0.9\columnwidth]{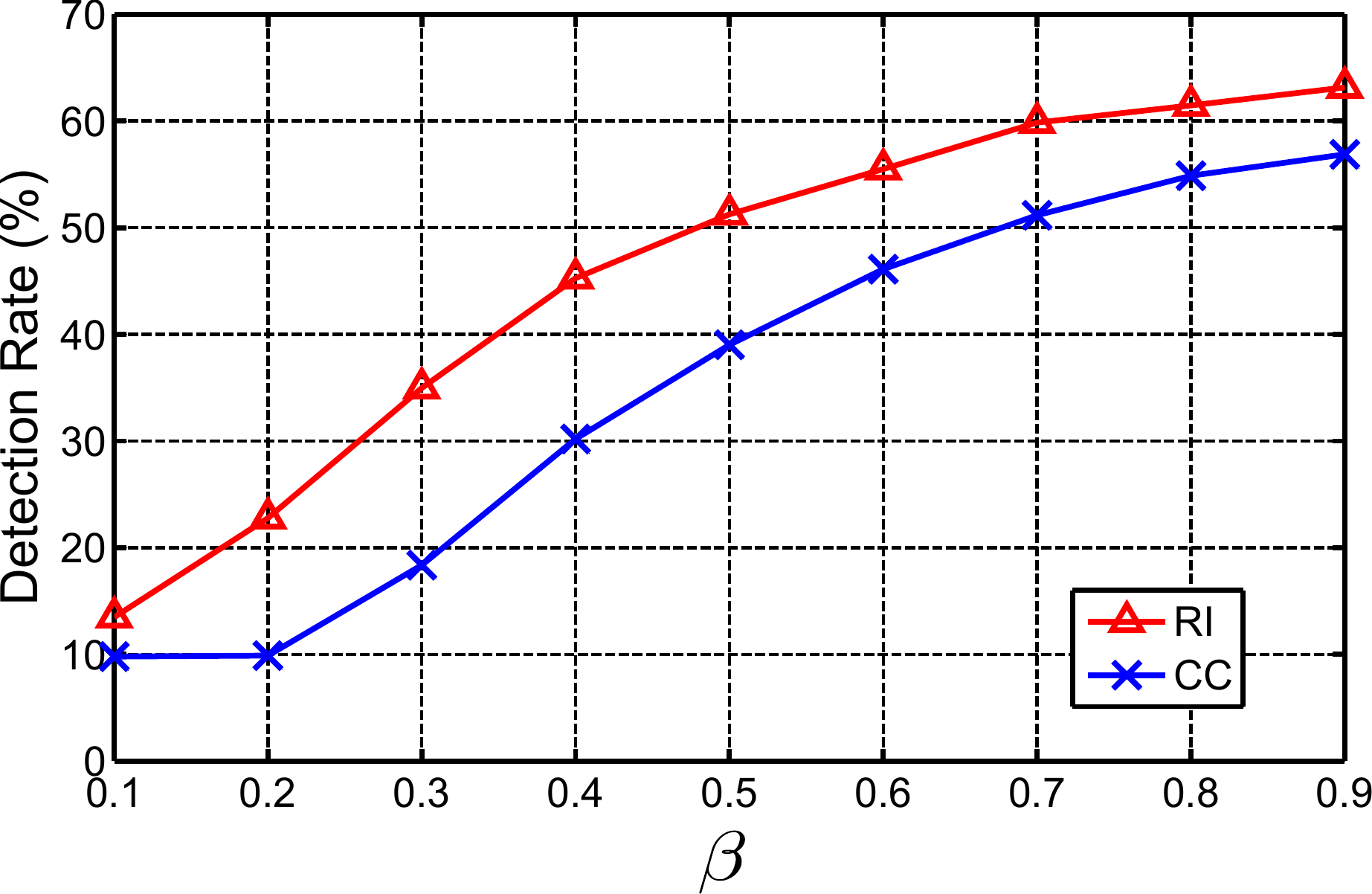}\\
  \caption{The Detection Rates of the Reverse Infection (RI) and Closeness Centrality (CC)
Algorithms on Binomial Random Trees}\label{figure:binomial}
  \end{centering}
\end{figure}
The results are shown in Figure \ref{figure:binomial}. Similar to
the regular tree case, as $\beta$ increases, the tree is more denser
which increases the number of survived branching processes and the
detection rate. The reverse infection algorithm outperforms the
closeness centrality algorithm by $10.16\%$ on average.
\begin{figure}[!t]
\begin{centering}
  \includegraphics[width=1\columnwidth]{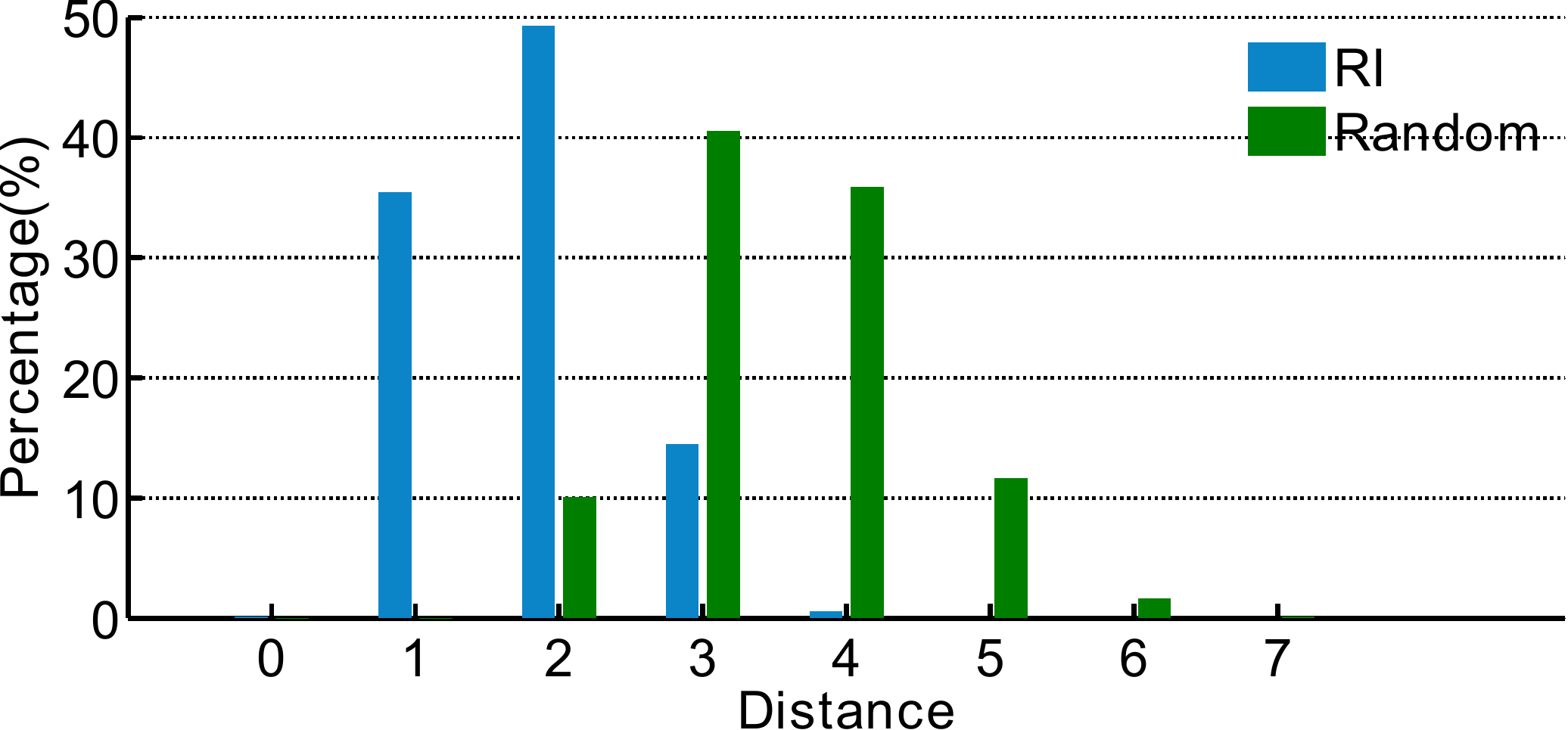}
  \caption{The Performance of the
Reverse Infection (RI) on the Internet Autonomous Systems Network}\label{figure:ias-total}
  \end{centering}
    \end{figure}

\subsection{Real World Networks}
We next conducted experiments on three real world networks
---  the Internet Autonomous Systems network
(IAS)\footnote{Available at
\url{http://snap.stanford.edu/data/index.html}}, the Wikipedia
who-votes-on-whom network (Wikipeida)\footnotemark[4], and the power
grid network (PG)\footnote{Available at
\url{http://www-personal.umich.edu/~mejn/netdata/}}. We compare the reverse infection algorithm with random guessing, which randomly selects a node and declares it as
the information source. In these networks, the infection
probability $q$ was chosen uniformly from $(0,0.05)$ and the
recovery probability $p$ was chosen uniformly from $(0,q).$ Here we
chose small infection probabilities since the network was of finite
size so the infection process should be controlled to make sure that
not all nodes were infected when the network was observed. The
duration $t$ was an integer uniformly chosen from $[3,200].$ We
selected the networks in which the total number of infected and
recovered nodes was in the range of $[50,500].$
\begin{figure}[!t]
\begin{centering}
  \includegraphics[width=1\columnwidth]{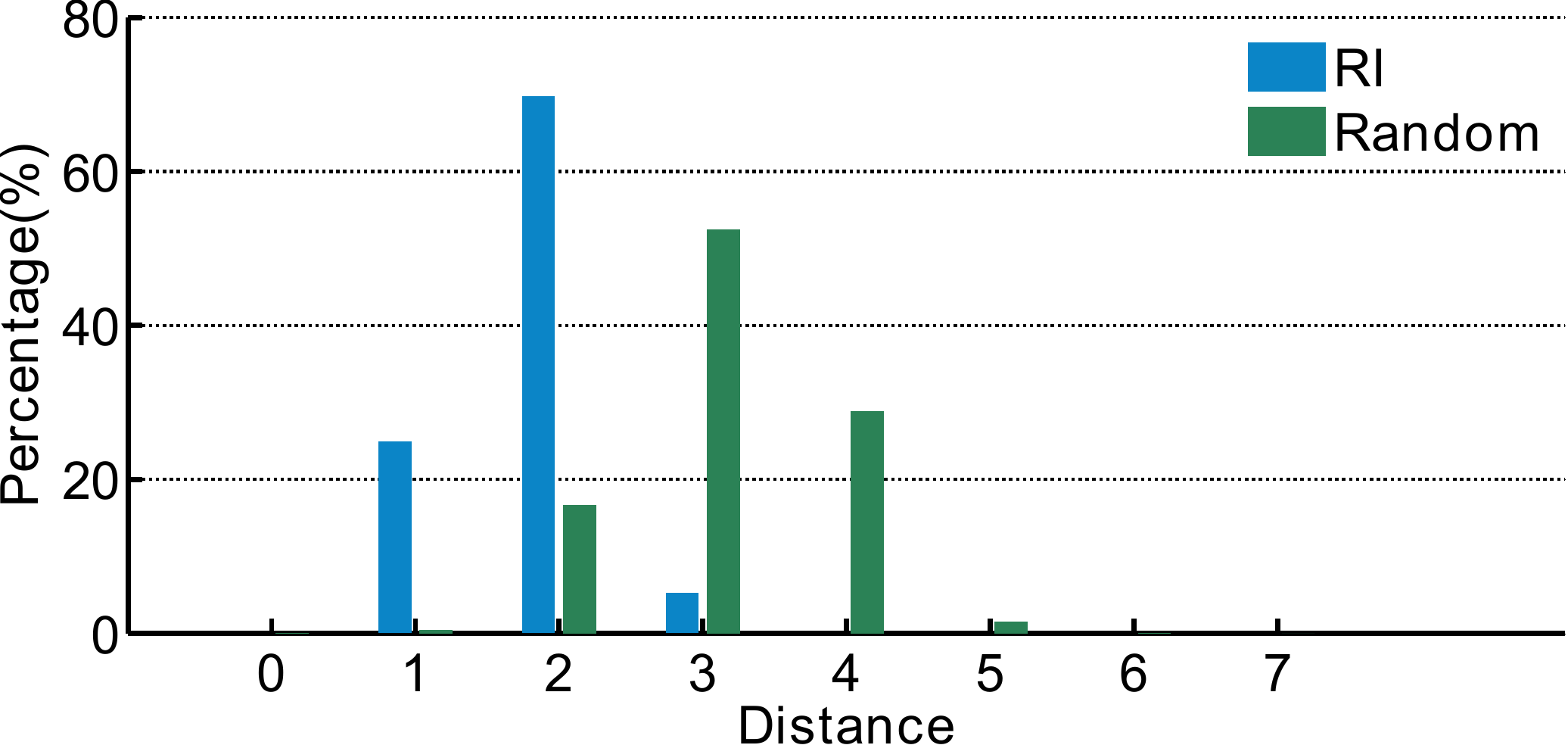}
  \caption{The Performance of the
Reverse Infection (RI) on the Wikipedia Who-Votes-on-Whom Network}\label{figure:wiki-total}
  \end{centering}
    \end{figure}

\subsubsection{The Internet autonomous systems network}
Figure \ref{figure:ias-total} shows the
results on the the Internet autonomous systems network. An Internet
autonomous system is a collection of connected routers who use a common routing
policy. The Internet
autonomous system network is obtained based on the recorded communication between the Internet autonomous systems
inferred from Oregon route-views on March, 31st, 2001. The network
consists of 10,670 nodes and 22,002 edges. According to Figure
\ref{figure:ias-total}, more than 80\% of the estimators identified
by the reverse infection algorithm are no more than two hops away
from the actual sources, comparing to 10\% under the random
guessing.
\begin{figure}[!t]
\begin{centering}
  \includegraphics[width=0.9\columnwidth]{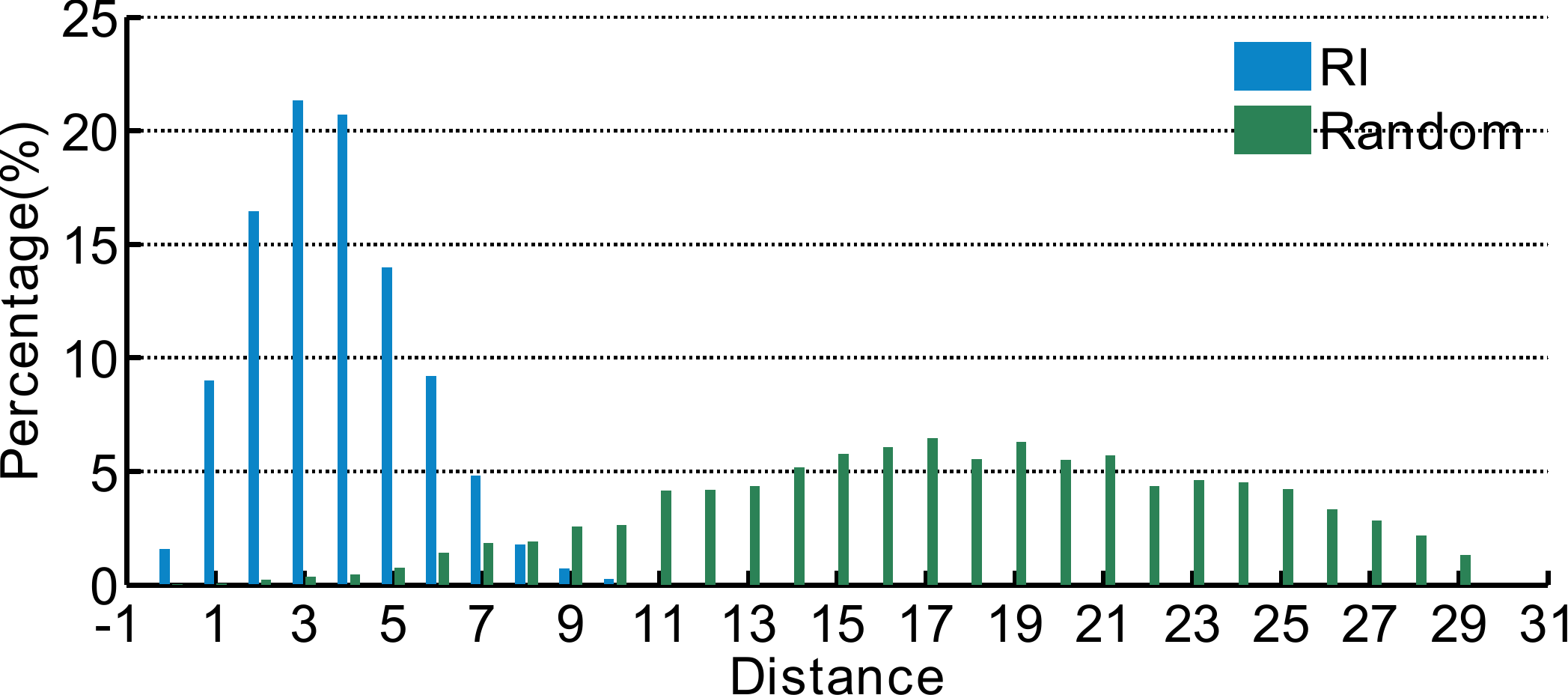}
  \caption{The Performance of the
Reverse Infection (RI) on the Power Grid Network}\label{figure:pg-total}
  \end{centering}
  \end{figure}

\subsubsection{The Wikipedia who-votes-on-whom network}
Figure \ref{figure:wiki-total} shows results on the
Wikipedia who-votes-on-whom network, in which two nodes are
connected if one user voted on the other in the administrator
promotion elections. The network has 100,736 links and 7,066 nodes.
We have similar observations as for the Internet autonomous systems network: the majority of the estimators produced by the reverse infection algorithm are no more than two hops away from the actual sources;
and only less than $20\%$ of the estimators of random guessing are within two hops from the actual sources.

\subsubsection{The power grid network}

Figure \ref{figure:pg-total} shows the
results on the power grid, which has 4,941 nodes and 6,594 edges. As
we can see, the reverse infection algorithm performs better than the
random guessing. The peak of the reverse infection algorithm appears
at the third hop versus the seventeenth hop under random
guessing.

\section{Conclusion}
In this paper, we developed a sample path based approach to find the information source under the SIR model. We
proved that the sample path based estimator is a node with the minimum infection eccentricity.
Based on that, a reverse infection algorithm has been proposed. We
analyzed the performance of the reverse infection algorithm on
regular trees, and showed that with a high probability the distance
between the estimator and actual source is a constant, independent of
the number of infected nodes and the time the network was observed.
We evaluated
the performance of the proposed reverse infection algorithm on
several different network topologies.

\begin{appendices}
\section{Notation Table}
\begin{tabularx}{0.5\textwidth}{c|X}
  \hline
  $q$ & the probability an infected node infects its neighbors\\
  \hline
  $p$ & the probability an infected node recovers  \\
  \hline
  $v^*$ & the actual information source \\
  \hline
  $v^{\dag}$ & the estimator of the information source \\
  \hline
  $d(v,u)$ & the length of shortest path between node $v$ and node $u$ \\
  \hline
  ${\cal C}(v)$ & the set of children of node $v$\\
  \hline
  $\tilde{e}(v)$ & the infection eccentricity of node $v$\\
  \hline
  $t^*_v$ & the time duration associated of the optimal sample path in which node $v$ is the information source \\
  \hline
  $t^I_v$ & the infection time of node $v$ \\
  \hline
  $t^R_v$ & the recovery time of node $v$ \\
  \hline
  ${\bf Y}$ & the snapshot of all nodes\\
  \hline
  $T_u^{-v}$ & the tree rooted at node $u$ but without the branch
from $v.$\\
\hline
$X_v(t)$ & the state of node $v$ at time $t$\\
\hline
${\bf X}(t)$ & the states of all nodes at time $t$\\
\hline
${\bf X}[0,t]$ & the sample path from $0$ to $t$\\
\hline ${\bf X}([0,t],T_u^{-v})$ & the sample path from time slot
$0$ to $t$
restricted to $T_u^{-v}$\\
\hline
${\cal X}(t)$ & the set of all valid sample path from time slot $0$ to $t$\\
\hline
${\cal X}(t,T_u^{-v})$ & the set of all valid sample path from time slot $0$ to $t$ restricted to $T_u^{-v}$\\
\hline
${\cal I}$ & the set of the infected nodes\\
\hline
\end{tabularx}
\end{appendices}
\balance
\bibliographystyle{IEEEtran}
\bibliography{inlab-refs}
\end{document}